\documentclass[showkeys]{revtex4}

 \usepackage{lineno,hyperref}
 

\usepackage{amsmath}             
\usepackage{amssymb}
\usepackage{mathtools}
\usepackage{amsthm}             
\theoremstyle{plain}             
\newtheorem{theorem}{Theorem}[section]

\theoremstyle{definition}

\newcommand{\ma}{\textit{Mathematica$^{\small{\circledR}}$}}



\begin{document}

\title{Linearization and Krein-like functionals of hypergeometric orthogonal polynomials}

\author{J.S. Dehesa}
\email[]{dehesa@ugr.es}
\affiliation{Departamento de F\'isica At\'omica, Molecular y Nuclear, Universidad de Granada, Granada 18071, Spain}
\affiliation{Instituto Carlos I de F\'isica Te\'orica y Computacional, Universidad de Granada, Granada 18071, Spain}

\author{J. J. Moreno-Balc\'{a}zar}
\affiliation{Departamento de  Matem\'{a}ticas,  Universidad de Almer\'{\i}a,  Almer\'{\i}a 04120, Spain}
\affiliation{Instituto Carlos I de F\'isica Te\'orica y Computacional, Universidad de Granada, Granada 18071, Spain}

\author{I.V. Toranzo}
\affiliation{Departamento de Matem\'aticas, Universidad Rey Juan Carlos, Madrid 28933, Spain}

\begin{abstract}
The Krein-like $r$-functionals of the hypergeometric orthogonal polynomials $\{p_{n}(x) \}$ with kernel of the form $x^{s}[\omega(x)]^{\beta}p_{m_{1}}(x)\ldots p_{m_{r}}(x)$, being $\omega(x)$ the weight function on the interval $\Delta\in\mathbb{R}$, are determined by means of the Srivastava linearization method. The particular $2$-functionals, which are particularly relevant in quantum physics, are explicitly given in terms of the degrees and the characteristic parameters of the polynomials. They
include the well-known power moments and the novel Krein-like moments. Moreover, various related types of exponential and logarithmic functionals are also investigated.
\end{abstract}

\keywords{
Krein functionals, hypergeometric functions, orthogonal polynomials, linearization method, Lauricella functions
}

\maketitle

\section{Introduction}

Since the works of Ferrers and Adams in the late $1870$s and Bailey, Dougall and Erd\'elyi in the early $1900$s about the linearization formula of the product of two hypergeometric orthogonal polynomials (HOPs) of Legendre \cite{ferrers1877,adams1878,bailey1933}, Gegenbauer or ultraspherical \cite{dougall1919} and Laguerre \cite{erdelyi1938} type up until now, an intense activity about the linearization problem has been developed in the theory of HOPs from both theoretical \cite{askey1971,rahman1981,niukkanen1985,srivastava1988,ruiz1997,artes1998, ruiz1999,hounkonnou2000,leo2001,ruiz2001,ruiz2001b,srivastava2003a,srivastava2003b,park2006,sanchez2013} and applied \cite{koepf1998,koepf1998a,area2001,koepf2002,chen2003,gautschi2004,buyarov2004,gil2007,chaggara2007,koepf2010,foupou2013,tcheutia2014,evnin2016,savin2017,gautschi2018} standpoints. See also the monographs of Askey \cite{askey1975}, Andrews-Askey-Roy \cite{andrews1999}, Ismail \cite{ismail2005}, Koekoek-Lesky-Swarttouw \cite{koekoek2010}, Gautschi \cite{gautschi2004,gautschi2018} and Tcheutia \cite{tcheutia2014} for partial periodic reviews.\\

This linearization problem, which is equivalent to the evaluation of integrals of the product of three or more HOPs of the same type, is also called Clebsch-Gordan-type problem because its structure is similar to the Clebsch-Gordan series for spherical functions \cite{edmonds1957,niukkanen1985,srivastava1988,alvarez1998}. It arises in a wide range of mathematical and physical problems, from the generalized moment problems as stated by M. G. Krein \cite{akhiezer1962,krein1977,krein1959}, stochastic processes \cite{anshelevich2000} and combinatorics \cite{even1976,azor1982,sainte1985,ismail2013}, information entropies \cite{sanchez2013,larsson2002,sanchez2010,sanchez2011,puertas2017,puertas2017b} up to quantum physics. Indeed, numerous physical and chemical properties of a given complex system (atoms,  molecules,...) are determined by these overlap integrals or Krein-like functionals of various HOPs, basically because the main route for the search of the solutions of the corresponding Schr\"odinger equation is their expansion as linear combinations of a known basis set of functions which are controlled by the HOPs (see e.g. \cite{niukkanen1985, avery1992,aquilanti2001,avery2004,mitnik2011,avery2012,coletti2013,mccoy2016}). The computation of these overlap Krein-like integrals is a formidable task both analytically and numerically; the latter is basically because a naive numerical evaluation using quadratures is not convenient due to the increasing number of integrable singularities when the polynomial degree is increasing, which spoils any attempt to achieve reasonable accuracy even for rather small degree (see e.g., \cite{buyarov2004}).\\

In this work we first determine the generalized Krein-like $r$-integral functionals of HOPs of the form
\begin{equation}
\label{RFR}
\mathcal{J}_{\{m_{r} \}}(s,\beta) := \int_{\Delta}  [\omega(x)]^{\beta}\,x^{s}\, p_{m_{1}}(x)\,\ldots \,p_{m_{r}}(x)\, dx,
\end{equation}
where $s$ and $\beta$ are real parameters, and  $\omega(x)$ denotes the weight function on the real interval $\Delta$ with respect to which the polynomials $\{p_{m}(x) \}$ are orthogonal. Note that when 
$r=2$, they simplify to the standard Krein-like functionals mentioned above. These functionals $\mathcal{J}_{\{m_{r} \}}(s,\beta)$ are encountered in the computation of numerous fundamental and/or experimentally accessible quantities (e.g., multilevel transition probabilities, electric multipole moments, and kinetic, exchange and correlation energies) of atomic, molecular and nuclear systems by means of entropic and density functional methods (see e.g., \cite{barret1979,hasse1988,parr1989,ghosh2000} et sequel). The resulting compact expressions are given in the form of a multivariate Lauricella's hypergeometric function \cite{lauricella1893,appell1926,srivastava1985,srivastava1988} $F_{A}^{(r)}(x_1, ...,x_r)$ evaluated at $(\frac{1}{\beta}, ...,\frac{1}{\beta})$. Then, we apply these expressions to the Laguerre, Hermite and Jacobi polynomials \cite{nikiforov1988,olver2010} . Moreover, we use the corresponding results to evaluate various relevant mathematical quantities which include the well-known power moments $\langle x^{s}\rangle_{n}$, the Krein-like moments $\langle [\omega(x)]^{k}\rangle_{n}$, and various related types of exponential and logarithmic functionals of the form $\langle x^{k}e^{-\alpha x}\rangle_{n}$, $\langle (\log x)^{k}\rangle_{n}$ and $\langle [\omega(x)]^{k}\log \omega(x)\rangle_{n}$, where $\langle f(x)\rangle_{n} :=\int_{\Delta} f(x)\rho_{n}(x)\, dx$ and $\rho_{n}(x) = \omega(x)\,[p_{n}^{(\alpha)}(x)]^2$ is the Rakhmanov probability density \cite{rakhmanov1977} associated to the polynomial $p_{n}^{(\alpha)}(x)$. This density plays a relevant role both in approximation theory and quantum physics because it controls the asymptotic behavior of the ratio of polynomials with consecutive degrees \cite{rakhmanov1977} and it describes the quantum probability density of numerous single-particle quantum systrems (see e.g., \cite{dehesa2001}). The above-mentioned mathematical expectation values  describe a number of entropic \cite{ruiz2000} and physical quantities which are experimentally accessible in atomic and nuclear physics \cite{barret1970,barret1979,hasse1988,ford1973,engfer1974,duch1983,surzhykov2005,suslov2008}. \\

Then, we calculate the simplest and most familiar Krein-like $2$-functionals of the form
\begin{equation}
\label{RF2}
\mathcal{J}_{m,n}(s,\beta):=\int_{0}^{\infty} [\omega(x)]^{\beta}\,x^{s}\,p_{m}(x)\,p_{n}(x)\, dx
\end{equation}
by means of two different approaches. One which gives the functionals in terms of the coefficients of the second-order differential equation fulfilled by the HOP under consideration. Another one which allows us to express the functionals explicitly in terms of the degrees and the characteristic parameters of the polynomials by making use of various algebraic  characterizations of the HOPs, if known; this is illustrated for the canonical families of Hermite, Laguerre and Jacobi types. 
Both methods are applied to various mathematical quantities of the above-mentioned form $\langle f(x)\rangle_{n}$.

The structure of this work is the following. First, we compute the general functionals $\mathcal{J}_{\{m_{r} \}}(s,\beta)$ given by Eq. \eqref{RFR} by using the scarcely known Srivastava-Niukkanen's linearization method for the product of a finite number of HOPs \cite{niukkanen1985,srivastava1988}, where in most cases the linearization coefficients are given in terms of the Lauricella function of type A \cite{lauricella1893,appell1926}. Then, this method is used for the Laguerre, Hermite and Jacobi polynomials and later the resulting compact expressions are applied for the evaluation of various mathematical moments of power, Krein, exponential and logarithmic types. Second, we compute the functionals $\mathcal{J}_{m,n}(s,\beta)$, given by Eq. \eqref{RF2}, by using the second-order hypergeometric differential equation of the involved HOPs, obtaining expressions in terms of the polynomial coefficients of the differential equation. Then, the utility of these general expressions is illustrated by applying them to the Laguerre, Hermite and Jacobi polynomials. Third, we compute the functionals $\mathcal{J}_{m,n}(s,\beta)$, given by Eq. \eqref{RF2}, for the Laguerre, Hermite and Jacobi polynomials by means of various characterizations of these HOPs, obtaining explicit expressions in terms of the degrees and the characteristic parameters of the associated weight functions; then, these expressions are applied for the evaluation of the mathematical moments mentioned above.  \\

\section{Lauricella-based approach to generalized Krein-like functionals of HOPs}

In this section, we describe a method to compute the generalized Krein-like integral  functionals of HOPs $\{p_{m}\}\equiv\{p_{m}^{(\alpha)}\}$, $\mathcal{J}_{\{m_{r} \}}(s,\alpha,\beta)$, defined in Eq. \eqref{RFR}, by using the orthogonality condition of the polynomials and the following Srivastava-Niukkanen's expansion \cite{niukkanen1985,srivastava1988} of the integral kernel
\begin{equation}
\label{SBLM1}
x^{s}p_{m_{1}}^{(\alpha)}(x)\ldots p_{m_{r}}^{(\alpha)}(x) = \sum_{i=0}^{\infty} c_{i}(s,r,\{m_{1}, \ldots, m_{r} \}, \alpha) p_{i}^{(\gamma)}(x),
\end{equation}
where the linearization coefficients $c_{i}(s,r,\{m_{1}, \ldots, m_{r} \},\alpha)$ can be usually expressed in terms of a multivariate hypergeometric function of Lauricella \cite{lauricella1893,appell1926} or Srivastava-Daoust types \cite{srivastava1988}.   \\
We take \eqref{SBLM1} into \eqref{RFR} obtaining
\begin{equation}
\label{GKLF2}
\mathcal{J}_{\{m_{r} \}}(s,\alpha,\beta)  = \sum_{i=0}^{\infty} c_{i}(s,r,\{m_{1}, \ldots, m_{r} \}, \alpha)  \int_{\Delta} [\omega(x)]^{\beta}\, p_{i}^{(\gamma)}(x)\, dx,
\end{equation}
which by a proper change of variable can be transformed as
\begin{equation}
\label{GKLF3}
\mathcal{J}_{\{m_{r} \}}(s,\alpha,\beta)  = \sum_{i=0}^{\infty} \tilde{c}_{i}(s,r,\{m_{1}, \ldots, m_{r} \}, \alpha,\beta)  \int_{\Delta} \omega(y)\, p_{i}^{(\gamma)}(y)\, dy
\end{equation}
Then, we use the orthogonality relation of the involved HOPs, i.e.,
\begin{equation}
\label{OR}
\int_{\Delta} \omega(x)p^{(\alpha)}_{n}(x)p^{(\alpha)}_{m}(x)\, dx = d_{n}^{2}\,\delta_{n,m},
\end{equation}
where $d_{n}$ stands for the normalization constant, to finally obtain
\begin{equation}
\label{GKLF4}
\mathcal{J}_{\{m_{r} \}}(s,\alpha,\beta)  =  \tilde{c}_{0}(s,r,\{m_{1}, \ldots, m_{r} \}, \alpha,\beta)\,d_{n}^{2}.
\end{equation}
Herein, we have assumed $p_{0}^{(\gamma)}(x)=1$ and applied Eq. \eqref{OR}. The explicit form of the coefficients $ \tilde{c}_{i}(s,r,\{m_{1}, \ldots, m_{r} \}, \alpha,\beta)$ depends on the previous change of variable.\\

Let us now use this method for generalized Krein-like integral functionals of the three canonical families of HOPS in a continuous real variable; namely, the Laguerre, Hermite and Jacobi polynomials.

\subsection{Laguerre polynomials}

In this case the method gives the following result:
\begin{theorem}
Let $L_{m}^{(\alpha)}(x)$ denote the Laguerre polynomials orthogonal with respect to the weight function $\omega_{\alpha}(x)=x^{\alpha}e^{-x}$ on $(0,\infty)$ \cite{nikiforov1988,olver2010}. Then, the generalized Krein-like functionals of the Laguerre polynomials defined by 
\begin{equation}
\label{GKLFL1}
\mathcal{J}_{\{m_{r} \}}^{(L)}(s,\alpha,\beta)  = \int_{0}^{\infty} [\omega_{\alpha}(x)]^{\beta} x^{s}L_{m_{1}}^{(\alpha)}(x)\cdots L_{m_{r}}^{(\alpha)}(x)\, dx, \quad \alpha >-\frac{s+1}{\beta},\quad s\in\mathbb{R}_{+}
\end{equation}
are given by
\begin{align}
\label{GKL}
\mathcal{J}^{(L)}_{ \{m_{r} \} } (s,\alpha,\beta) &= \beta^{-\beta\alpha-s-1}\,c_{0}\left( \beta\alpha+s, r, \{m_{r} \}, \frac{1}{\beta}, \alpha,0 \right). 
\end{align}
with
\begin{align}
\label{SNLC2}
 c_{0}(\beta\alpha+s,r,\{m_{r} \}, 1/\beta, \alpha,0) &= \Gamma(\beta\alpha +s +1) \binom{m_{1}+\alpha}{m_{1}}\cdots \binom{m_{r}+\alpha}{m_{r}}\nonumber\\
& \times F_{A}^{(r)} \left(\begin{array}{cc}
														 \beta\alpha +s  +1 ; -m_{1}, \ldots, -m_{r}& \\
																										&; \frac{1}{\beta}, \ldots, \frac{1}{\beta}\\
														\alpha+1, \ldots, \alpha+1 & \\
														\end{array}\right).
\end{align}

\end{theorem}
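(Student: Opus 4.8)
The plan is to specialize the general three-step recipe of Eqs.~\eqref{SBLM1}--\eqref{GKLF4} to the Laguerre weight and then to extract the constant linearization coefficient $c_0$ in closed form. First I would insert $\omega_\alpha(x)=x^\alpha e^{-x}$ into the defining integral \eqref{GKLFL1}, so that $[\omega_\alpha(x)]^\beta x^s = x^{\beta\alpha+s}e^{-\beta x}$ and
\[
\mathcal{J}^{(L)}_{\{m_r\}}(s,\alpha,\beta)=\int_0^\infty x^{\beta\alpha+s}e^{-\beta x}\,L^{(\alpha)}_{m_1}(x)\cdots L^{(\alpha)}_{m_r}(x)\,dx.
\]

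The decisive change of variable is $y=\beta x$, which turns $e^{-\beta x}$ into the standard factor $e^{-y}$ and produces the overall prefactor $\beta^{-\beta\alpha-s-1}$ of Eq.~\eqref{GKL}; crucially, it rescales the polynomial arguments to $y/\beta$, i.e.\ by the factor $1/\beta$ that will reappear as the evaluation point of the Lauricella function. After this substitution the remaining integral is exactly the constant coefficient in the Srivastava--Niukkanen expansion \eqref{SBLM1} of $y^{\beta\alpha+s}L^{(\alpha)}_{m_1}(y/\beta)\cdots L^{(\alpha)}_{m_r}(y/\beta)$ onto the basis $\{L^{(\gamma)}_i\}$ with target parameter $\gamma=0$. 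Integrating term by term against $e^{-y}$ and applying the orthogonality \eqref{OR} in the form $\int_0^\infty e^{-y}L^{(0)}_i(y)\,dy=\delta_{i,0}$ (which holds because $L^{(0)}_0\equiv1$ and $e^{-y}=\omega_0(y)$ with $d_0^2=1$) annihilates every term except $i=0$, leaving $\mathcal{J}^{(L)}_{\{m_r\}}=\beta^{-\beta\alpha-s-1}\,c_0(\beta\alpha+s,r,\{m_r\},1/\beta,\alpha,0)$, which is \eqref{GKL}. I would stress here that the choice $\gamma=0$ is forced: only then is $e^{-y}$ the orthogonality weight, so that a single coefficient survives.

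It then remains to evaluate $c_0$ explicitly. Here I would insert the terminating hypergeometric representation $L^{(\alpha)}_{m_j}(z)=\binom{m_j+\alpha}{m_j}\sum_{k_j=0}^{m_j}\frac{(-m_j)_{k_j}}{(\alpha+1)_{k_j}}\frac{z^{k_j}}{k_j!}$ for each factor with $z=y/\beta$, so that the product becomes a finite $r$-fold sum. Integrating each monomial against $e^{-y}$ through $\int_0^\infty e^{-y}y^{\beta\alpha+s+K}\,dy=\Gamma(\beta\alpha+s+K+1)$, with $K=k_1+\cdots+k_r$, and rewriting this gamma factor via $\Gamma(\beta\alpha+s+K+1)=\Gamma(\beta\alpha+s+1)\,(\beta\alpha+s+1)_K$, collects the prefactor $\Gamma(\beta\alpha+s+1)\prod_j\binom{m_j+\alpha}{m_j}$ together with the multiple series $\sum_{k_1,\dots,k_r}(\beta\alpha+s+1)_{k_1+\cdots+k_r}\prod_j\frac{(-m_j)_{k_j}}{(\alpha+1)_{k_j}}\frac{(1/\beta)^{k_j}}{k_j!}$. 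Matching this against the definition of the Lauricella function of type $A$ with top parameter $\beta\alpha+s+1$, numerator parameters $-m_j$, denominator parameters $\alpha+1$, and arguments $1/\beta$ reproduces Eq.~\eqref{SNLC2} verbatim.

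There is no genuine analytic obstacle in the interchange of summation and integration, since each Laguerre factor is a polynomial and the whole manipulation is therefore a finite sum; likewise the $F_A^{(r)}$ series terminates through the factors $(-m_j)_{k_j}$. The one point that must be handled with care is convergence of the original integral at the origin: as $x\to 0^+$ the integrand behaves like $x^{\beta\alpha+s}$ (the polynomials tend to the nonzero values $\binom{m_j+\alpha}{m_j}$), so I would note at the outset that the stated hypothesis $\alpha>-(s+1)/\beta$ (with $\beta>0$) is precisely the condition $\beta\alpha+s>-1$ ensuring integrability there, which is also exactly what keeps $\Gamma(\beta\alpha+s+1)$ finite and the whole scheme well defined; convergence at infinity is automatic for $\beta>0$.
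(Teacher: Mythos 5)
Your proposal is correct and follows essentially the same route as the paper: the change of variable $y=\beta x$ producing the prefactor $\beta^{-\beta\alpha-s-1}$ and the rescaled argument $y/\beta$, the Srivastava--Niukkanen expansion onto $\{L_i^{(0)}\}$, and the orthogonality against $e^{-y}$ isolating the $i=0$ term. The only (harmless) difference is that where the paper cites the general coefficient formula \eqref{SNLC1} and specializes it to $i=0$ (so that the $(r+1)$-variate Lauricella function collapses to the $r$-variate one), you rederive $c_0$ directly from the terminating ${}_1F_1$ representation of the Laguerre polynomials and term-by-term integration, which is more self-contained and also makes explicit the integrability condition $\beta\alpha+s>-1$ that the paper leaves implicit.
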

\begin{proof}
To obtain \eqref{GKL} from \eqref{GKLFL1} we first make the change of variable $y=\beta x$, obtaining
\begin{align}
\label{GKLFL2}
\mathcal{J}_{\{m_{r} \}}^{(L)}(s,\alpha,\beta)  &= \int_{0}^{\infty} e^{-\beta x}x^{\beta\alpha+s}L_{m_{1}}^{(\alpha)}(x)\ldots L_{m_{r}}^{(\alpha)}(x)\, dx \nonumber \\
&= \beta^{-\beta\alpha-s-1}\int_{0}^{\infty} e^{-y}y^{\beta\alpha+s}L_{m_{1}}^{(\alpha)}(y/\beta)\ldots L_{m_{r}}^{(\alpha)}(y/\beta)\, dy
\end{align}
Now we use the following Srivastava-Niukkanen expansion \cite{niukkanen1985,srivastava1988} for the Laguerre product  
\begin{equation}
\label{SNL}
y^{\mu}L_{m_{1}}^{(\alpha)}(ty)\cdots L_{m_{r}}^{(\alpha)}(ty) = \sum_{i=0}^{\infty} c_{i}(\mu, r,t,\{m_{r} \},\alpha,\gamma)\,L_{i}^{(\gamma)}(y)
\end{equation} 
($\gamma>-1$) with the linearization coefficients
\begin{align}
\label{SNLC1}
c_{i}(\mu, r,t,\{m_{r} \},\alpha,\gamma) &= (\gamma +1)_{\mu} \binom{m_{1}+\alpha}{m_{1}}\cdots \binom{m_{r}+\alpha}{m_{r}} \nonumber \\
&\times F_{A}^{(r+1)} \left(\begin{array}{cc}
														\gamma + \mu +1 ; -m_{1}, \ldots, -m_{r}, -i &\\
																										&; t, \ldots, t, 1 \\
														\alpha+1, \ldots, \alpha+1, \gamma +1 & \\
														\end{array}\right)
\end{align}
where $F_{A}^{(r+1)}(x_{1},\ldots,x_{r})$ denotes the following Lauricella function of type A of $r+1$ variables and $2r+3$ parameters \cite{srivastava1985}
\begin{equation}
\label{LF1}
F_{A}^{(s)} \left(\begin{array}{cc}
														a ; b_{1}, \ldots, b_{s} &\\
																										&; x_{1}, \ldots, x_{s} \\
														c_{1}, \ldots, c_{s} & \\
														\end{array}\right) = \sum_{j_{1},\ldots,j{s}=0}^{\infty}\frac{(a)_{j_{1}+\ldots+j_{s} }(b_{1})_{j_{1} } \cdots (b_{s})_{j_{s} } }{(c_{1})_{j_{1} } \cdots (c_{s})_{j_{s} } }\frac{x_{1}^{j_{1}}\cdots x_{s}^{j_{s}} }{j_{1}!\cdots j_{s}!}.
\end{equation}

  Then, we use \eqref{SNL} with $\mu=\beta\alpha+s$, $t=1/\beta$ and $\gamma=0$ into \eqref{GKLFL2} to obtain
\begin{equation}
\label{GKLFL3}
\mathcal{J}_{\{m_{r} \}}^{(L)}(s,\alpha,\beta)  = \beta^{-\beta\alpha-s-1}\sum_{i=0}^{\infty} c_{i}(\beta\alpha+s,r,\{m_{r} \}, 1/\beta, \alpha,0) \int_{0}^{\infty} e^{-y}L_{i}^{(0)}(y)\, dy.
\end{equation} 
Finally, the orthogonality condition of the Laguerre polynomials \cite{olver2010} allows us to find the wanted Eq. \eqref{GKL}  since the only term which survives in the infinite summation is the one with $i=0$.
\end{proof}
\textbf{Applications.}
Let us now apply this theorem to the power, Krein-like, exponential and logarithmic moments of the Rakhmanov probability density $\rho_{n}^{(L)}(x)$ of the Laguerre polynomials $L_{n}^{(\alpha)}(x)$ given by
$$\rho_{n}^{(L)}(x) = \omega_{\alpha}(x)\,[L_{m}^{(\alpha)}(x)]^2$$
\begin{enumerate}
	\item Power moments and Krein-like moments.
	\\
The power $\langle x^{s}\rangle_{n}$ and Krein-like $\langle [\omega(x)]^{k}\rangle_{n}$ moments of the Laguerre polynomials are given by
\begin{equation}
\label{LPM1}
\langle x^{s}\rangle_{n}^{(L)} = \int_{\Delta} x^{s}\rho_{n}^{(L)}(x)\, dx = \mathcal{J}_{n,n}^{(L)}(s,\alpha,1),
\end{equation}
\begin{equation}
\label{LKM1}
\langle [\omega(x)]^{k}\rangle_{n}^{(L)} = \int_{\Delta}  [\omega(x)]^{k}\rho_{n}^{(L)}(x)\, dx = \mathcal{J}_{n,n}^{(L)}(0,\alpha,k+1)
\end{equation}
respectively. From Eq. \eqref{GKL} one has that 
 \begin{equation}
 \label{LPM2}
 \langle x^{s}\rangle_{n}^{(L)} = c_{0}\left( \alpha+s, 2,n, 1, \alpha,0 \right),
 \end{equation}
 with 
 \begin{equation}
 \label{PMCS1}
 c_{0}\left( \alpha+s, 2,n, 1, \alpha,0 \right) = \Gamma(\alpha +s +1) \binom{n+\alpha}{n}^{2}F_{A}^{(2)} \left(\begin{array}{cc}
 														 \alpha +s  +1 ; -n, -n& \\
 																										&; 1,1 \\
 														\alpha+1, \alpha+1 & \\
 														\end{array}\right),
 \end{equation}
 and
 \begin{equation}
 \label{LKM2}
 \langle [\omega(x)]^{k}\rangle_{n}^{(L)} = (k+1)^{-(k+1)\alpha-1}c_{0}\left( (k+1)\alpha, 2, n, \frac{1}{k+1}, \alpha,0 \right)
 \end{equation}
 with
  \begin{align}
  \label{PMCS2}
  c_{0}\left( (k+1)\alpha, 2, n, \frac{1}{k+1}, \alpha,0 \right) = \Gamma((k+1)\alpha  +1) \binom{n+\alpha}{n}^{2}\nonumber \\
  & \hspace{-2cm} \times F_{A}^{(2)} \left(\begin{array}{cc}
  														 (k+1)\alpha +1 ; -n, -n& \\
  																										&; \frac{1}{k+1}, \frac{1}{k+1} \\
  														\alpha+1, \alpha+1 & \\
  														\end{array}\right),
  \end{align}
 for the power and the Krein-like moments, respectively, where $F_{A}^{(2)} \equiv F_{2}$ stands for the Appell hypergeometric series of two variables.
 
 \item Other exponential and logarithmic functionals.
 \\
 The logarithmic and exponential functionals of the Laguerre polynomials are defined as
\begin{equation}
\label{LLM1}
\langle (\log x)^{k}\rangle_{n}^{(L)} := \int_{\Delta}  (\log x)^{k}\rho^{(L)}_{n}(x)\, dx = \frac{d^{k}}{ds^{k}}\mathcal{J}_{n,n}^{(L)}(s,\alpha,1)\Bigg|_{s=0},
\end{equation}
\begin{equation}
\label{EFML1}
\langle x^{k}e^{-\alpha x}\rangle_{n}^{(L)} := \int_{\Delta}   x^{k}e^{-\alpha x}\rho^{(L)}_{n}(x)\, dx = \sum_{m=0}^{\infty}\frac{(-\alpha)^{m}}{m!}\mathcal{J}_{n,n}^{(L)}(k+m,\alpha,1),
\end{equation}
and 
\begin{equation}
\label{WFML1}
\langle \omega^{k}(x)\log\omega(x)\rangle_{n}^{(L)} := \int_{\Delta}  \omega^{k+1}(x)\log\omega(x) \rho^{(L)}_{n}(x)\, dx = \frac{d}{dk}\mathcal{J}_{n,n}^{(L)}(0,\alpha,k+1).
\end{equation}
From Eq. \eqref{GKL} one has that they are explicitly given by
 \begin{equation}
 \label{LLM2}
 \langle (\log x)^{k}\rangle_{n}^{(L)} = \frac{d^{k}}{ds^{k}}c_{0}\left( \alpha+s, 2,n, 1, \alpha,0 \right)\Bigg|_{s=0}
 \end{equation}
 \begin{equation}
 \label{EFML2}
 \langle x^{k}e^{-\alpha x}\rangle_{n}^{(L)} = \sum_{m=0}^{\infty}\frac{(-\alpha)^{m}}{m!}c_{0}(\alpha+k+m,2,n,1,\alpha,0)
 \end{equation}
 with
 \begin{align}
  \label{PMCS3}
  c_{0}\left( \alpha+k+m, 2,n, 1, \alpha,0 \right) &= \Gamma(\alpha +k+m +1) \binom{n+\alpha}{n}^{2}\nonumber \\
  & \hspace{-2cm} \times F_{A}^{(2)} \left(\begin{array}{cc}
  														 \alpha +k+m  +1 ; -n, -n& \\
  																										&; 1,1 \\
  														\alpha+1, \alpha+1 & \\
  														\end{array}\right),
  \end{align}
 and 
 \begin{align}
 \label{WFML2}
\hspace{-1cm} \langle \omega^{k}(x)\log\omega(x)\rangle_{n}^{(L)} &= \mathcal{J}^{(L)}_{n,n}(0,\alpha,k+1)\Bigg\{\frac{d}{dk}\log \left[c_{0}\left( (k+1)\alpha, 2, n, \frac{1}{k+1}, \alpha,0 \right)\right] \nonumber \\
& \hspace{4cm} -\frac{1}{k+1}-\alpha(1+\log(1+k)) \Bigg\}.
 \end{align}
 \end{enumerate}
\subsection{Hermite polynomials}
In this case the method gives the following result:
%

\begin{theorem}
Let $H_{m}(x)$ denote the Hermite polynomials orthogonal with respect to the weight function $\omega(x)=e^{-x^{2}}$ on $(-\infty,\infty)$ \cite{nikiforov1988,olver2010}. Then, the generalized Krein-like functionals of the Hermite polynomials defined by 
\begin{equation}
\label{GKLFH11}
\mathcal{J}_{\{m_{r} \}}^{(H)}(s,\beta)  = \int_{-\infty}^{\infty} [\omega(x)]^{\beta} x^{s}H_{m_{1}}(x)\cdots H_{m_{r}}(x)\, dx,\quad s\in \mathbb{Z}_{+}
\end{equation}
are given by
\begin{align}
\label{GKH}
\mathcal{J}^{(H)}_{ \{m_{r} \} } (s,\beta) &= \frac{\sqrt{\pi}}{\beta^{\frac{N+s+1}{2}}}\frac{(-1)^{\frac{1}{2}(M-N)}\prod_{i=1}^{r}m_{i}!}{\prod_{i=1}^{r}\left(\frac{1}{2}\right)_{\frac{m_{i}+\nu_{i}}{2}}}\,c_{0}\left(\frac{s+N}{2},r,\left\{\frac{m_{r}-\nu_{r}}{2} \right\},\frac{1}{\beta}, \left\{\nu_{r}-\frac{1}{2}\right\} \right),
\end{align}
where $M=m_{1}+\ldots+m_{r}$, $N=\nu_{1}+\ldots+\nu_{r}$, $\frac{s+N}{2}\in\mathbb{Z}_{+}$, $\nu=0(1)$ for even(odd) degree of the Hermite polynomials and the coefficient $c_{0}\left(\frac{s+N}{2},r,\left\{\frac{m_{r}-\nu_{r}}{2} \right\},\frac{1}{\beta}, \left\{\nu_{r}-\frac{1}{2}\right\} \right)$ is given by
\begin{align}
\label{SNHC2}
 c_{0}\left(\frac{s+N}{2}, r,\frac{1}{\beta},\left\{\frac{m_{r}-\nu_{r}}{2} \right\},\left\{\nu_{r}-\frac{1}{2}\right\},-\frac{1}{2}\right) &= \left(\frac{1}{2}\right)_{\frac{s+N}{2}} \binom{\frac{m_{1}+\nu_{1}-1}{2}}{\frac{m_{1}-\nu_{1}}{2}}\cdots \binom{\frac{m_{r}+\nu_{r}-1}{2}}{\frac{m_{r}-\nu_{r}}{2}} \nonumber \\
&\hspace{-4cm} \times F_{A}^{(r)} \left(\begin{array}{cc}
\frac{s+N+1}{2}; -\frac{m_{1}-\nu_{1}}{2}, \ldots, -\frac{m_{r}-\nu_{r}}{2} &\\
&; \frac{1}{\beta}, \ldots, \frac{1}{\beta} \\
\nu_{1}+ \frac{1}{2}, \ldots, \nu_{r}+ \frac{1}{2}& \\
\end{array}\right).
\end{align}
\end{theorem}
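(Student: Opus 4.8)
The plan is to reduce the Hermite integral to a Laguerre integral of exactly the type handled in the preceding subsection, exploiting the quadratic map between the two families. First I would write each Hermite factor in terms of a Laguerre polynomial in $x^{2}$ via the standard parity-dependent identity
$$H_{m_i}(x) = (-1)^{\frac{m_i - \nu_i}{2}}\, 2^{m_i}\left(\frac{m_i-\nu_i}{2}\right)!\, x^{\nu_i}\, L_{\frac{m_i-\nu_i}{2}}^{\left(\nu_i - \frac12\right)}\!\left(x^{2}\right),$$
with $\nu_i\in\{0,1\}$ the parity of $m_i$. Taking the product over $i=1,\ldots,r$ collects the scalar prefactor $(-1)^{(M-N)/2}2^{M}\prod_i((m_i-\nu_i)/2)!$, a global factor $x^{N}$, and a product of Laguerre polynomials in $x^{2}$ with the (generally distinct) upper indices $\nu_i-\tfrac12$.

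Second, I would exploit parity. Since every Laguerre factor is even in $x$, the integrand carries the parity of $x^{s+N}$; hence the integral vanishes unless $s+N$ is even, which is exactly the hypothesis $(s+N)/2\in\mathbb{Z}_+$. Under that hypothesis I would fold $\int_{-\infty}^{\infty}$ into $2\int_{0}^{\infty}$ and substitute $u=x^{2}$, turning $[\omega(x)]^{\beta}=e^{-\beta x^{2}}$ into $e^{-\beta u}$ and reducing the whole expression to a Laguerre-type integral
$$\int_{0}^{\infty} e^{-\beta u}\, u^{\frac{s+N-1}{2}}\prod_{i=1}^{r} L_{\frac{m_i-\nu_i}{2}}^{(\nu_i - 1/2)}(u)\, du.$$

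Third, mirroring the Laguerre proof, I would rescale $y=\beta u$ to extract the power $\beta^{-(s+N+1)/2}$, split off a factor $y^{-1/2}$ so as to match the weight of the $L^{(-1/2)}$ orthogonality, and apply the Srivastava-Niukkanen expansion \eqref{SNL} (in its version allowing distinct upper indices $\alpha_i=\nu_i-\tfrac12$) with $\mu=(s+N)/2$, $t=1/\beta$ and $\gamma=-\tfrac12$ to the bracket $y^{(s+N)/2}\prod_i L(y/\beta)$. Integrating against $e^{-y}y^{-1/2}$ and invoking the orthogonality relation \eqref{OR} then kills every term except $i=0$, contributing $\Gamma(\tfrac12)=\sqrt{\pi}$ and leaving precisely the coefficient $c_0((s+N)/2, r, 1/\beta, \{(m_r-\nu_r)/2\}, \{\nu_r-\tfrac12\}, -\tfrac12)$ of \eqref{SNHC2}.

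The last, and to my mind the only nontrivial, step is matching the scalar prefactor: the conversion has produced $2^{M}\prod_i((m_i-\nu_i)/2)!$, whereas \eqref{GKH} displays $\prod_i m_i!\,/\,\prod_i(\tfrac12)_{(m_i+\nu_i)/2}$. These agree factor by factor through the Legendre duplication identity
$$m! = 2^{m}\left(\tfrac{m-\nu}{2}\right)!\,(\tfrac12)_{(m+\nu)/2},$$
checked separately for $\nu=0$ and $\nu=1$ using $(\tfrac12)_k=(2k)!/(4^{k}k!)$; assembling the $\sqrt{\pi}$, the $\beta$-power, the sign, and this rewritten prefactor yields \eqref{GKH}. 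I expect the bookkeeping of the parity parameters $\nu_i$ through the product — and the verification that the distinct-index Srivastava-Niukkanen coefficients reproduce \eqref{SNHC2} with $c_i\mapsto c_0$ and $F_{A}^{(r+1)}\mapsto F_{A}^{(r)}$ — to be the most error-prone part, though each piece is routine once the quadratic substitution is in place.
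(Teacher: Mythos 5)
Your proposal is correct and follows essentially the same route as the paper's proof: both convert each Hermite factor into a Laguerre polynomial in the squared variable via the parity identity, apply the Srivastava--Niukkanen linearization with $\mu=(s+N)/2$, $t=1/\beta$, $\gamma=-\tfrac12$, and let orthogonality kill every term except $i=0$, with your prefactor $2^{M}\prod_i((m_i-\nu_i)/2)!$ matching the paper's $\prod_i m_i!/(\tfrac12)_{(m_i+\nu_i)/2}$ exactly by the duplication identity you state. The only immaterial differences are that you perform the quadratic substitution $u=x^2$ explicitly (which also makes the vanishing for $s+N$ odd transparent) and finish with Laguerre orthogonality on the half-line, whereas the paper converts $L_i^{(-1/2)}(y^2)$ back to $H_{2i}(y)$ and invokes Hermite orthogonality on the whole line.
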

\begin{proof}
We first rewrite \eqref{GKLFH11} as
\begin{align}
\label{GKLFH2}
\mathcal{J}_{\{m_{r} \}}^{(H)}(s,\beta)  &= \int_{-\infty}^{\infty} e^{-\beta x^{2}}x^{s}H_{m_{1}}(x)\cdots H_{m_{r}}(x)\, dx \nonumber \\
&= \beta^{-\frac{1}{2}(s+1)} \int_{-\infty}^{\infty} e^{-y^{2}} y^{s} H_{m_{1}}\left(\frac{y}{\sqrt{\beta}}\right)\cdots H_{m_{r}}\left(\frac{y}{\sqrt{\beta}}\right)\, dy .
\end{align}
Then, we use the conversion formula \cite{chaggara2007}
	\begin{equation}
	\label{HLR1}
	H_{m_{i}}(x) = \frac{(-1)^{\frac{m_{i}-\nu_{i}}{2}}m_{i}!}{\left(\frac{1}{2}\right)_{\frac{m_{i}+\nu_{i}}{2}}}x^{\nu_{i}}L_{\frac{m_{i}-\nu_{i}}{2}}^{(\nu_{i}-\frac{1}{2})}(x^{2}),
	\end{equation}
with $\nu_{i} =0(1)$ for even(odd) degree of the Hermite polynomials, which allow us to convert the Hermite-dependent kernel of Eq. \eqref{HLR1} into the following Laguerre-dependent kernel:
\begin{align}
\label{HLR2}
\hspace{-1cm} y^{s}H_{m_{1}}\left(\frac{y}{\sqrt{\beta}}\right)\cdots H_{m_{r}}\left(\frac{y}{\sqrt{\beta}}\right) &= \frac{(-1)^{\frac{1}{2}(M-N)}\prod_{i=1}^{r}m_{i}!}{\prod_{i=1}^{r}\left(\frac{1}{2}\right)_{\frac{m_{i}+\nu_{i}}{2}}} \frac{y^{s+N}}{\beta^{\frac{N}{2}}}\nonumber \\
& \hspace{-1cm} \times  L_{\frac{m_{1}-\nu_{1}}{2}}^{(\nu_{1}-1/2)}\left(\frac{y^{2}}{\beta}\right)\cdots L_{\frac{m_{r}-\nu_{r}}{2}}^{(\nu_{r}-1/2)}\left(\frac{y^{2}}{\beta}\right),
\end{align}
Then, using the Laguerre-linearization formula \eqref{SNL} we obtain
\begin{align}
\label{HLR3}
y^{s}H_{m_{1}}\left(\frac{y}{\sqrt{\beta}}\right)\cdots H_{m_{r}}\left(\frac{y}{\sqrt{\beta}}\right) &= \frac{(-1)^{\frac{1}{2}(M-N)}\prod_{i=1}^{r}m_{i}!}{\prod_{i=1}^{r}\left(\frac{1}{2}\right)_{\frac{m_{i}+\nu_{i}}{2}}\beta^{\frac{N}{2}}} \nonumber \\
& \hspace{-2cm} \times \sum_{i=0}^{\infty} c_{i}\left(\frac{s+N}{2}, r,\frac{1}{\beta},\left\{\frac{m_{r}-\nu_{r}}{2} \right\},\left\{\nu_{r}-\frac{1}{2}\right\},-\frac{1}{2}\right)L_{i}^{(-\frac{1}{2})}(y^{2})\nonumber\\
&= \frac{(-1)^{\frac{1}{2}(M-N)}\prod_{i=1}^{r}m_{i}!}{\prod_{i=1}^{r}\left(\frac{1}{2}\right)_{\frac{m_{i}+\nu_{i}}{2}}\beta^{\frac{N}{2}}} \nonumber \\
&\hspace{-2cm} \times \sum_{i=0}^{\infty} c_{i}\left(\frac{s+N}{2}, r,\frac{1}{\beta},\left\{\frac{m_{r}-\nu_{r}}{2} \right\},\left\{\nu_{r}-\frac{1}{2}\right\},-\frac{1}{2}\right)\frac{1}{(-1)^{i}2^{2i}i!}H_{2i}(y)
\end{align} 
for $\frac{s+N}{2}\in\mathbb{Z}_{+}$ and with coefficients
\begin{align}
\label{SNLC3}
 c_{i}\left(\frac{s+N}{2}, r,\frac{1}{\beta},\left\{\frac{m_{r}-\nu_{r}}{2} \right\},\left\{\nu_{r}-\frac{1}{2}\right\},-\frac{1}{2}\right) &= \left(\frac{1}{2}\right)_{\frac{s+N}{2}} \binom{\frac{m_{1}+\nu_{1}-1}{2}}{\frac{m_{1}-\nu_{1}}{2}}\cdots \binom{\frac{m_{r}+\nu_{r}-1}{2}}{\frac{m_{r}-\nu_{r}}{2}} \nonumber \\
&\hspace{-4cm} \times F_{A}^{(r+1)} \left(\begin{array}{cc}
\frac{s+N+1}{2}; -\frac{m_{1}-\nu_{1}}{2}, \ldots, -\frac{m_{r}-\nu_{r}}{2}, -i &\\
&; \frac{1}{\beta}, \ldots, \frac{1}{\beta}, 1 \\
\nu_{1}+ \frac{1}{2}, \ldots, \nu_{r}+ \frac{1}{2}, \frac{1}{2} & \\
\end{array}\right)
\end{align}
where $F_{A}^{(r+1)}(x_{1},\ldots,x_{r})$ denotes the Lauricella function of type A.
Now, we take \eqref{HLR3} into Eq. \eqref{GKLFH2} obtaining
\begin{align}
\label{GKLFH3}
 \mathcal{J}_{\{m_{r} \}}^{(H)}(s,\beta)  &= \frac{(-1)^{\frac{1}{2}(M-N)}\prod_{i=1}^{r}m_{i}!}{\prod_{i=1}^{r}\left(\frac{1}{2}\right)_{\frac{m_{i}+\nu_{i}}{2}}\beta^{\frac{N+s+1}{2}}} \sum_{i=0}^{\infty} c_{i}\left(\frac{s+N}{2}, r,\frac{1}{\beta},\left\{\frac{m_{r}-\nu_{r}}{2} \right\},\left\{\nu_{r}-\frac{1}{2}\right\},-\frac{1}{2}\right)\nonumber \\ 
 &\times\frac{1}{(-1)^{i}2^{2i}i!} \int_{-\infty}^{\infty} e^{-y^{2}} H_{2i}(y) \, dy.
\end{align}
Using the orthogonality condition of the Hermite polynomials \cite{olver2010} one finally has the wanted expressions \eqref{GKH} and \eqref{SNHC2} of the theorem.
\end{proof}

\textbf{Applications.}
Let us now apply this theorem to the power, Krein-like, exponential and logarithmic moments of the Rakhmanov probability density $\rho_{n}^{(H)}(x)$ of Hermite polynomials $H_{n}(x)$ given by
$$\rho_{n}^{(H)}(x) = \omega(x)\,[H_{m}(x)]^2$$
\begin{enumerate}
	\item Power moments and Krein-like moments.
	\\
The power and Krein-like moments of the Hermite polynomials are given by
\begin{equation}
\label{HPM1}
\langle x^{s}\rangle_{n}^{(H)} = \int_{\Delta} x^{s}\rho^{(H)}_{n}(x)\, dx = \mathcal{J}_{n,n}^{(H)}(s,1),
\end{equation}
\begin{equation}
\label{HKM1}
\langle [\omega(x)]^{k}\rangle_{n}^{(H)} = \int_{\Delta}  [\omega(x)]^{k}\rho^{(H)}_{n}(x)\, dx = \mathcal{J}_{n,n}^{(H)}(0,k+1)
\end{equation}
respectively. From Eq. \eqref{GKH} one has that 
 \begin{align}
 \label{HPM2}
 \langle x^{s}\rangle_{n}^{(H)} &= \sqrt{\pi}\frac{(-1)^{\frac{M-N}{2}}(n!)^{2}}{(\frac{1}{2})_{\frac{n+\nu_{1}}{2}}(\frac{1}{2})_{\frac{n+\nu_{2}}{2}}}\nonumber\\ & \times c_{0}\left(\frac{s+N}{2},2,\left\{\frac{n-\nu_{1}}{2},\frac{n-\nu_{2}}{2}\right\},1, \left\{\nu_{1}-\frac{1}{2},\nu_{2}-\frac{1}{2}\right\} \right),
 \end{align}
 with $N=\nu_{1}+\nu_{2}$, 
 \begin{align}
 \label{HPMCS1}
 c_{0}\left( \frac{s+N}{2},2,\left\{\frac{n-\nu_{1}}{2},\frac{n-\nu_{2}}{2}\right\},1, \left\{\nu_{1}-\frac{1}{2},\nu_{2}-\frac{1}{2}\right\} \right) \nonumber \\ &\hspace{-8cm} =\left(\frac{1}{2}\right)_{\frac{s+N}{2}}\binom{\frac{n+\nu_{1}-1}{2}}{\frac{n-\nu_{1}}{2}} \binom{\frac{n+\nu_{2}-1}{2}}{\frac{n-\nu_{2}}{2}}  F_{A}^{(2)} \left(\begin{array}{cc}
 													\frac{s+N+1}{2} ; -\frac{n-\nu_{1}}{2}, -\frac{n-\nu_{2}}{2} &\\
 																										&; 1,1\\
 														\nu_{1}+\frac{1}{2}, \nu_{2}+\frac{1}{2}& \\
 														\end{array}\right),
 \end{align}
 and
 \begin{align}
 \label{HKM2}
 \langle [\omega(x)]^{k}\rangle_{n}^{(H)} &= \frac{\sqrt{\pi}}{(k+1)^{\frac{N+1}{2}}}\frac{(-1)^{\frac{M-N}{2}}(n!)^{2}}{(\frac{1}{2})_{\frac{n+\nu_{1}}{2}}(\frac{1}{2})_{\frac{n+\nu_{2}}{2}}}\nonumber \\
 & \times  c_{0}\left(\frac{N}{2},2,\left\{\frac{n-\nu_{1}}{2},\frac{n-\nu_{2}}{2}\right\},\frac{1}{k+1}, \left\{\nu_{1}-\frac{1}{2},\nu_{2}-\frac{1}{2}\right\}  \right)
 \end{align}
 with $M = 2n$, 
  \begin{align}
  \label{HPMCS2}
  c_{0}\left(\frac{N}{2},2,\left\{\frac{n-\nu_{1}}{2},\frac{n-\nu_{2}}{2}\right\},1, \left\{\nu_{1}-\frac{1}{2},\nu_{2}-\frac{1}{2}\right\}   \right)\nonumber \\ &\hspace{-8cm} =\left(\frac{1}{2}\right)_{\frac{N}{2}}\binom{\frac{n+\nu_{1}-1}{2}}{\frac{n-\nu_{1}}{2}} \binom{\frac{n+\nu_{2}-1}{2}}{\frac{n-\nu_{2}}{2}}\nonumber  F_{A}^{(2)} \left(\begin{array}{cc}
  														\frac{N+1}{2} ;-\frac{n-\nu_{1}}{2}, -\frac{n-\nu_{2}}{2} &\\
  																										&; \frac{1}{k+1}, \frac{1}{k+1}\\
  														\nu_{1}+\frac{1}{2}, \nu_{2}+\frac{1}{2}& \\
  														\end{array}\right),
  \end{align}
 for the power and Krein-like moments, respectively.

 \item Other exponential and logarithmic functionals.
 \\
 The logarithmic and exponential functionals of the Hermite polynomials are defined as
\begin{equation}
\label{HLM1}
\langle (\log x)^{k}\rangle_{n}^{(H)} := \int_{\Delta}  (\log x)^{k}\rho^{(H)}_{n}(x)\, dx = \frac{d^{k}}{ds^{k}}\mathcal{J}_{n,n}^{(H)}(s,1)\Bigg|_{s=0},
\end{equation}
\begin{equation}
\label{EFMH1}
\langle x^{k}e^{-\alpha x}\rangle_{n}^{(H)} := \int_{\Delta}   x^{k}e^{-\alpha x}\rho^{(H)}_{n}(x)\, dx = \sum_{m=0}^{\infty}\frac{(-\alpha)^{m}}{m!}\mathcal{J}_{n,n}^{(H)}(k+m,1),
\end{equation}
and 
\begin{equation}
\label{WFMH1}
\langle \omega^{k}(x)\log\omega(x)\rangle_{n}^{(H)} := \int_{\Delta}  \omega^{k+1}(x)\log\omega(x) \rho^{(H)}_{n}(x)\, dx = \frac{d}{dk}\mathcal{J}_{n,n}^{(H)}(0,k+1).
\end{equation}
From Eq. \eqref{GKH} one has that they are explicitly given by
\begin{align}
 \label{HLM2}
 \langle (\log x)^{k}\rangle_{n}^{(H)} &= \sqrt{\pi}\frac{(-1)^{\frac{M-N}{2}}(n!)^{2}}{(\frac{1}{2})_{\frac{n+\nu_{1}}{2}}(\frac{1}{2})_{\frac{n+\nu_{2}}{2}}}\nonumber \\
 & \times \frac{d^{k}}{ds^{k}} c_{0}\left(\frac{s+N}{2},2,\left\{\frac{n-\nu_{1}}{2},\frac{n-\nu_{2}}{2}\right\},1, \left\{\nu_{1}-\frac{1}{2},\nu_{2}-\frac{1}{2}\right\}  \right)\Bigg|_{s=0}.
 \end{align}
 \begin{align}
 \label{EFMH2}
 \langle x^{k}e^{-\alpha x}\rangle_{n}^{(H)} &=\sqrt{\pi}\frac{(-1)^{\frac{M-N}{2}}(n!)^{2}}{(\frac{1}{2})_{\frac{n+\nu_{1}}{2}}(\frac{1}{2})_{\frac{n+\nu_{2}}{2}}}\nonumber \\ & \times \sum_{m=0}^{\infty}\frac{(-\alpha)^{m}}{m!}\, c_{0}\left(\frac{k+m+N}{2},2,\left\{\frac{n-\nu_{1}}{2},\frac{n-\nu_{2}}{2}\right\},1, \left\{\nu_{1}-\frac{1}{2},\nu_{2}-\frac{1}{2}\right\}  \right)
 \end{align}
 with
\begin{align}
 \label{HPMCS3}
 c_{0}\left(\frac{k+m+N}{2},2,\left\{\frac{n-\nu_{1}}{2},\frac{n-\nu_{2}}{2}\right\},1, \left\{\nu_{1}-\frac{1}{2},\nu_{2}-\frac{1}{2}\right\} \right) &\nonumber \\ &\hspace{-8cm}=\left(\frac{1}{2}\right)_{\frac{k+m+N}{2}}\binom{\frac{n+\nu_{1}-1}{2}}{\frac{n-\nu_{1}}{2}} \binom{\frac{n+\nu_{2}-1}{2}}{\frac{n-\nu_{2}}{2}}\nonumber \\
 &\hspace{-8cm} \times F_{A}^{(2)} \left(\begin{array}{cc}
 														\frac{k+m+N+1}{2} ; -\frac{n-\nu_{1}}{2}, -\frac{n-\nu_{2}}{2} &\\
 																										&; 1,1\\
 														\nu_{1}+\frac{1}{2}, \nu_{2}+\frac{1}{2}& \\
 														\end{array}\right),
 \end{align}
 and 
 \begin{align}
 \label{WFMH2}
 \langle \omega^{k}(x)\log\omega(x)\rangle_{n}^{(H)} &= \mathcal{J}_{n,n}^{(H)}(0,k+1)\nonumber \\
 & \times \left\{\frac{d}{dk}\log \left[c_{0}\left(\frac{N}{2},2,\left\{\frac{n-\nu_{1}}{2},\frac{n-\nu_{2}}{2}\right\},1, \left\{\nu_{1}-\frac{1}{2},\nu_{2}-\frac{1}{2}\right\}   \right)\right]- \frac{N+1}{2(k+1)} \right\}.
 \end{align}
 \end{enumerate}

\subsection{Jacobi polynomials}

In this case the method gives the following result:
\begin{theorem}
Let $P_{m}^{(\alpha,\gamma)}(x)$ denote the Jacobi polynomials orthogonal with respect to the weight function $\omega_{\alpha,\gamma}(x)=(1-x)^{\alpha}(1+x)^{\gamma}$ on $(-1,1)$ \cite{nikiforov1988,olver2010}. Then, the generalized Krein-like functionals of the Jacobi polynomials defined by 
\begin{equation}
\label{GKLFJ11}
\mathcal{J}_{\{m_{r} \}}^{(J)}(s,\alpha,\gamma,\beta)  = \int_{-1}^{1} [\omega_{\alpha,\gamma}(x)]^{\beta} x^{s}P_{m_{1}}^{(\alpha,\gamma)}(x)\ldots P_{m_{r}}^{(\alpha,\gamma)}(x)\, dx, \quad \alpha,\gamma>-\frac{1}{\beta},\quad s\in\mathbb{Z}_{+}
\end{equation}
are given by
\begin{align}
\label{GKJ}
\mathcal{J}^{(J)}_{ \{m_{r} \} } (s,\alpha,\gamma,\beta) &= \frac{2^{\beta(\alpha+\gamma)+1}\Gamma(\alpha\beta+1)\Gamma(\gamma\beta+1) }{\Gamma(\beta(\alpha+\gamma)+2)}\tilde{c}_{0}(s,r,\{m_{r} \}, \alpha, \gamma,\beta\alpha,\beta\gamma)
\end{align}
 with
 \begin{align}
 \label{GKJC}
 \tilde{c}_{0}(s,r,\{m_{r} \}, \alpha, \gamma,\beta\alpha,\beta\gamma) &= (\beta\alpha+1)_{s} \binom{m_{1}+\alpha}{m_{1}}\cdots \binom{m_{r}+\alpha}{m_{r}} \frac{\beta(\alpha +\gamma )+1}{(\beta(\alpha +\gamma ) +1)_{s +1}}\nonumber\\
 & \hspace{-3.5cm}\times \sum_{j_{1},\ldots,j_{r}=0}^{m_{1},\ldots,m_{r}}\frac{(\beta\alpha +s +1)_{j_{1}+\ldots+j_{r}}}{(\beta(\alpha +\gamma) +s +2)_{j_{1}+\ldots+j_{r}}}\nonumber\\
 &\hspace{-3.5cm}\times \frac{(-m_{1})_{j_{1}}(m_{1}+\alpha+\gamma+1)_{j_{1}}\cdots(-m_{r})_{j_{r}}(m_{r}+\alpha+\gamma+1)_{j_{r}} }{(\alpha+1)_{j_{1}}\cdots (\alpha+1)_{j_{r}}}\frac{1}{j_{1}! \cdots j_{r}!}\nonumber\\
 & \hspace{-3.5cm}=(\beta\alpha+1)_{s} \binom{m_{1}+\alpha}{m_{1}}\cdots \binom{m_{r}+\alpha}{m_{r}} \frac{\beta(\alpha +\gamma )+1}{(\beta(\alpha +\gamma ) +1)_{s +1}}\nonumber\\
 &\hspace{-3.5cm}\times F_{1:1;\ldots;1}^{1:2;\ldots;2} \left(\begin{array}{cc}
  														  \beta\alpha+s+1: -m_{1}, \alpha+\gamma+m_{1}+1; \ldots; -m_{r}, \alpha+\gamma+m_{r}+1  &\\
  																										&; 1, \ldots, 1 \\
  														\beta(\alpha+\gamma)+s+2: \alpha+1; \ldots; \alpha+1 & \\
  														\end{array}\right)  ,\quad s\in\mathbb{Z}_{+}.
 \end{align}
 where $F_{2:1;\ldots;1}^{2:2;\ldots;2}(x_{1},\ldots,x_{r})$ denotes the $r$-variate Srivastava-Daoust function \cite{srivastava1988} defined as
 \begin{align}
 F_{1:1;\ldots;1}^{1:2;\ldots;2}\left( \begin{array}{cc}
 a_0^{(1)}:\,a_1^{(1)},a_1^{(2)}; \ldots;a_r^{(1)},a_r^{(2)} & \\
 &; x_1, \ldots, x_r\\
 b_0^{(1)}:\,b_1^{(1)}; \ldots;b_r^{(1)}& \\
 \end{array}\right) &=\nonumber\\
&\hspace{-8cm} =  \sum_{j_{1}, \ldots, j_r=0 }^{\infty} \frac{\left(a_0^{(1)}\right)_{j_1+\ldots+j_r}}{\left(b_0^{(1)}\right)_{j_1+\ldots+j_r}} \frac{\left(a_1^{(1)}\right)_{j_1}\left(a_1^{(2)}\right)_{j_1}\cdots\left(a_r^{(1)}\right)_{j_r}\left(a_r^{(2)}\right)_{j_r}}{\left(b_1^{(1)}\right)_{j_1}\left(b_r^{(1)}\right)_{j_r}}\frac{x_1^{j_1}x_2^{j_2}\cdots x_r^{j_r}}{j_1!j_2!\cdots j_r!}.
 \end{align}
\end{theorem}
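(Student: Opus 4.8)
The plan is to follow the general scheme of Eqs.~\eqref{SBLM1}--\eqref{GKLF4}, specialized to the Jacobi weight. First I would observe that raising the weight to the power $\beta$ merely shifts the parameters, $[\omega_{\alpha,\gamma}(x)]^{\beta}=(1-x)^{\beta\alpha}(1+x)^{\beta\gamma}=\omega_{\beta\alpha,\beta\gamma}(x)$, so that the natural orthogonal basis in which to expand the integral kernel is $\{P_i^{(\beta\alpha,\beta\gamma)}\}$. Writing $x^{s}P_{m_1}^{(\alpha,\gamma)}(x)\cdots P_{m_r}^{(\alpha,\gamma)}(x)=\sum_{i\ge 0}\tilde c_i\,P_i^{(\beta\alpha,\beta\gamma)}(x)$ and integrating against $\omega_{\beta\alpha,\beta\gamma}$, orthogonality kills every term except $i=0$ (recall $P_0^{(\beta\alpha,\beta\gamma)}=1$), giving $\mathcal{J}^{(J)}_{\{m_r\}}=\tilde c_0\,d_0^2$ exactly as in Eq.~\eqref{GKLF4}.

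Next I would compute the two pieces separately. The normalization $d_0^2=\int_{-1}^1\omega_{\beta\alpha,\beta\gamma}(x)\,dx$ is a Beta integral: with $t=(1-x)/2$ one gets $d_0^2=2^{\beta(\alpha+\gamma)+1}B(\beta\alpha+1,\beta\gamma+1)$, which is precisely the Gamma-function prefactor displayed in Eq.~\eqref{GKJ}. The coefficient $\tilde c_0$ I would obtain directly, without invoking an abstract linearization identity, by substituting the terminating hypergeometric representation
\[
P_{m_i}^{(\alpha,\gamma)}(x)=\binom{m_i+\alpha}{m_i}\sum_{j_i=0}^{m_i}\frac{(-m_i)_{j_i}(m_i+\alpha+\gamma+1)_{j_i}}{(\alpha+1)_{j_i}\,j_i!}\Big(\tfrac{1-x}{2}\Big)^{j_i}
\]
for each factor, interchanging the finite sums with the integral, and evaluating the resulting elementary integrals term by term.

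The heart of the computation is the $x$-integral that multiplies the generic term $(\frac{1-x}{2})^{j_1+\cdots+j_r}$. After the substitution $t=(1-x)/2$ this reduces, for each multi-index, to a Beta-type integral $\int_0^1 t^{\beta\alpha+J}(1-t)^{\beta\gamma}(\cdots)\,dt$ with $J=j_1+\cdots+j_r$; dividing by $d_0^2$ converts the ratio of Gamma functions into Pochhammer symbols and, after splitting $(\cdot)_{s+J}=(\cdot)_s\,(\cdot+s)_J$, produces both the $s$-dependent prefactor $(\beta\alpha+1)_s\,\frac{\beta(\alpha+\gamma)+1}{(\beta(\alpha+\gamma)+1)_{s+1}}$ and the ratio $\frac{(\beta\alpha+s+1)_J}{(\beta(\alpha+\gamma)+s+2)_J}$ sitting inside the multiple sum. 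Collecting the $r$ terminating sums with this common $J$-dependent ratio then matches term-by-term the defining series of the Srivastava--Daoust function $F_{1:1;\dots;1}^{1:2;\dots;2}$ evaluated at $(1,\dots,1)$, yielding Eq.~\eqref{GKJC}.

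The main obstacle I anticipate is the careful treatment of the $x^{s}$ factor: it is this factor, combined with the $(\frac{1-x}{2})^{J}$ coming from the polynomial expansion so that a single additive power $\beta\alpha+s+J$ feeds one Beta integral, that is responsible for the slightly unusual prefactor $\frac{\beta(\alpha+\gamma)+1}{(\beta(\alpha+\gamma)+1)_{s+1}}$ and for the appearance of the coupling parameters $\beta\alpha+s+1$ and $\beta(\alpha+\gamma)+s+2$ in the numerator and denominator of the $F$-series. Keeping the Pochhammer bookkeeping consistent so that the $s$-dependence cleanly factors out of the $r$-fold sum---and confirming that the $i\ge1$ coefficients genuinely drop out under orthogonality---is where most of the care is needed; once that is done, the identification with the Srivastava--Daoust notation is purely a matter of reading off the parameters.
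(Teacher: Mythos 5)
Your overall skeleton is the same as the paper's: expand the kernel in the shifted basis $\{P_i^{(\beta\alpha,\beta\gamma)}\}$, let orthogonality kill every term except $i=0$ (since $P_0^{(\beta\alpha,\beta\gamma)}=1$), and evaluate $d_0^2=\int_{-1}^1(1-x)^{\beta\alpha}(1+x)^{\beta\gamma}\,dx$ as a Beta integral to obtain the Gamma-function prefactor of \eqref{GKJ}. Where you genuinely diverge is in how $\tilde c_0$ is obtained: the paper quotes the Srivastava--Niukkanen expansion \eqref{SNJ1}--\eqref{SNJC1} with $\mu=s$, $\epsilon=\beta\alpha$, $\delta=\beta\gamma$ and then uses ${}_2F_1(a,0;c;1)=1$, whereas you compute the projection directly from the terminating ${}_2F_1$ representation of each $P_{m_i}^{(\alpha,\gamma)}$ and term-by-term Beta integrals. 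That replacement is legitimate and more self-contained, and your identification of $d_0^2$ and of the Pochhammer bookkeeping that produces the factor $(\beta\alpha+1)_s\,\frac{\beta(\alpha+\gamma)+1}{(\beta(\alpha+\gamma)+1)_{s+1}}=\frac{(\beta\alpha+1)_s}{(\beta(\alpha+\gamma)+2)_s}$ is exactly right.

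There is, however, a concrete gap at precisely the step you flag as delicate. Under $t=(1-x)/2$ the factor $x^s$ becomes $(1-2t)^s$, not $t^s$, so the generic term does \emph{not} reduce to a single Beta integral with additive exponent $\beta\alpha+s+J$. Expanding $(1-2t)^s$ binomially, each multi-index instead contributes a terminating ${}_2F_1\left(-s,\beta\alpha+J+1;\beta(\alpha+\gamma)+J+2;2\right)$ rather than the clean ratio $(\beta\alpha+1)_{s+J}/(\beta(\alpha+\gamma)+2)_{s+J}$, and the $s$-dependence no longer factors out of the $r$-fold sum. The single-Beta-integral mechanism works only if the monomial is $\left(\frac{1-x}{2}\right)^s$. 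A quick sanity check makes the discrepancy visible: for $r=1$, $m_1=0$, $\beta=1$, $s=1$ the displayed formula gives $\tilde c_0=\frac{\alpha+1}{\alpha+\gamma+2}$, which is $\int_{-1}^1\frac{1-x}{2}\,\omega_{\alpha,\gamma}(x)\,dx/d_0^2$, whereas $\int_{-1}^1 x\,\omega_{\alpha,\gamma}(x)\,dx/d_0^2=\frac{\gamma-\alpha}{\alpha+\gamma+2}$. So your computation, carried out as described, actually establishes \eqref{GKJ}--\eqref{GKJC} for the kernel $\left(\frac{1-x}{2}\right)^s$ (which is what the quoted linearization coefficients evaluate to at $i=0$), not for $x^s$ as literally written; to treat $x^s$ you must either carry the extra argument-$2$ hypergeometric factor through to the final answer, or first expand $x^s=(1-2t)^s$ in powers of $(1-x)/2$, and in either case the result differs from \eqref{GKJC}. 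This is worth raising with the authors, since the paper's own proof inherits the same tension through its citation of \eqref{SNJ1}.
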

\begin{proof} 
To begin with, we write Eq. \eqref{GKLFJ11} more transparently as
\begin{equation}
\label{GKLFJ2}
\mathcal{J}_{\{m_{r} \}}^{(J)}(s,\alpha,\gamma,\beta)  = \int_{-1}^{1} (1-x)^{\beta\alpha}(1+x)^{\beta\gamma} x^{s}P_{m_{1}}^{(\alpha,\gamma)}(x)\ldots P_{m_{r}}^{(\alpha,\gamma)}(x)\, dx.
\end{equation}	
Then, to obtain Eq. \eqref{GKJ}, we linearize the kernel product $x^{\mu}P_{m_1}^{(\alpha,\gamma)}(x)\cdots P_{m_r}^{(\alpha,\gamma)}(x)$ by using the method mentioned above, i.e.,
\begin{equation}
\label{SNJ1}
x^{\mu}P_{m_{1}}^{(\alpha_{1},\gamma_{1})}(x)\cdots P_{m_{r}}^{(\alpha_{r},\gamma_{r})}(x) = \sum_{i=0}^{\infty} \tilde{c}_{i}(\mu, r,t,\{m_{r} \},\{ \alpha_{r},\gamma_{r}\},\epsilon,\delta)P_{i}^{(\epsilon,\delta)}(x)
\end{equation} 
with coefficients
\begin{align}
\label{SNJC1}
c_{i}(\mu, r,t,\{m_{r} \},\{ \alpha_{r},\gamma_{r}\},\epsilon,\delta) &= (\epsilon+1)_{\mu} \binom{m_{1}+\alpha_{1}}{m_{1}}\cdots \binom{m_{r}+\alpha_{r}}{m_{r}} \frac{\epsilon +\delta +2 i+1}{(\epsilon +\delta+i +1)_{\mu +1}}\nonumber\\
& \hspace{-3.5cm}\times \sum_{j_{1},\ldots,j_{r}=0}^{m_{1},\ldots,m_{r}}\frac{(\epsilon +\mu +1)_{j_{1}+\ldots+j_{r}}}{(\epsilon +\delta+i +\mu +2)_{j_{1}+\ldots+j_{r}}}\, _2F_1(\epsilon +j_{1}+\ldots+j_{r}+\mu +1,-i;\epsilon +1;1)\nonumber\\
&\hspace{-3.5cm}\times \frac{(-m_{1})_{j_{1}}(m_{1}+\alpha_{1}+\gamma_{1}+1)_{j_{1}}\cdots(-m_{r})_{j_{r}}(m_{r}+\alpha_{r}+\gamma_{r}+1)_{j_{r}} }{(\alpha_{1}+1)_{j_{1}}\cdots (\alpha_{r}+1)_{j_{r}}}\frac{1}{j_{1}! \cdots j_{r}!},\quad \mu\in\mathbb{R}_{+}.
\end{align}
Now, we use Eq. \eqref{SNJ1} with $\alpha_{1}=\ldots=\alpha_{r}=\alpha$, $\gamma_{1}=\ldots=\gamma_{r}=\gamma$, $\mu=s$, $t=1$, $\epsilon =\beta\alpha$ and $\delta=\beta\gamma$ into Eq. \eqref{GKLFJ2} to obtain
\begin{equation}
\label{GKLFJ3}
\mathcal{J}_{\{m_{r} \}}^{(J)}(s,\alpha,\gamma,\beta)  = \sum_{i=0}^{s+m_{1}+\ldots+m_{r}} \tilde{c}_{i}(s,r,\{m_{r} \},\alpha,\gamma,\beta\alpha,\beta\gamma ) \int_{-1}^{1} (1-x)^{\beta\alpha}(1+x)^{\beta\gamma} P_{i}^{(\beta\alpha,\beta\gamma)}(x)\, dx.
\end{equation}
Finally, using the orthogonality property of the Jacobi polynomials \cite{olver2010} and that $_2F_1(a,0;c;1)=1$, Eq. \eqref{GKLFJ3} boils down to the wanted expressions \eqref{GKJ}-\eqref{GKJC} of the theorem.
\end{proof}

\subsection{Power and Krein-like moments}
\textbf{Applications.} Let us now apply this theorem to the power, Krein-like, exponential and logarithmic moments of the Rakhmanov probability density $\rho_{n}^{(J)}(x)$ of the Jacobi polynomials $P_{n}^{(\alpha,\gamma)}(x)$ given by
$$\rho_{n}^{(J)}(x) = \omega_{\alpha,\gamma}(x)\,[P_{m}^{(\alpha,\gamma)}(x)]^2$$
\begin{enumerate}
	\item Power and Krein-like moments.
	\\
The power and Krein-like moments of the Jacobi polynomials are given by
\begin{equation}
\label{JPM1}
\langle x^{s}\rangle_{n}^{(J)} = \int_{\Delta} x^{s}\rho^{(J)}_{n}(x)\, dx = \mathcal{J}_{n,n}^{(J)}(s,\alpha,\gamma,1),
\end{equation}
\begin{equation}
\label{JKM1}
\langle [\omega(x)]^{k}\rangle_{n}^{(J)} = \int_{\Delta}  [\omega(x)]^{k}\rho^{(J)}_{n}(x)\, dx = \mathcal{J}_{n,n}^{(J)}(0,\alpha,\gamma,k+1)
\end{equation}
respectively. From Eq. \eqref{GKJ} one has that 
 \begin{equation}
 \label{JPM2}
 \langle x^{s}\rangle_{n}^{(J)} = \frac{2^{\alpha+\gamma+1}\Gamma(\alpha+1)\Gamma(\gamma+1) }{\Gamma(\alpha+\gamma+2)}\,\tilde{c}_{0}(s,2,n, \alpha, \gamma,\alpha,\gamma)
 \end{equation}
 with
 \begin{align}
  \label{JPMCS1}
 \tilde{c}_{0}(s,2,n, \alpha, \gamma,\alpha,\gamma) &= (\alpha+1)_{s} \binom{n+\alpha}{n}^{2} \frac{\alpha +\gamma +1}{(\alpha +\gamma  +1)_{s +1}}\nonumber\\
 &\hspace{-3.5cm}\times F_{1:1;1}^{1:2;2} \left(\begin{array}{cc}
 \alpha+s+1: -n, \alpha+\gamma+n+1; -n, \alpha+\gamma+n+1  &\\
 &; 1, 1 \\
 \alpha+\gamma+s+2: \alpha+1; \alpha+1 & \\
 \end{array}\right)  
 \end{align}
 \begin{align}
 \label{JKM2}
 \langle [\omega(x)]^{k}\rangle_{n}^{(J)} &= \frac{2^{(k+1)(\alpha+\gamma)+1}\Gamma(\alpha(k+1)+1)\Gamma(\gamma(k+1)+1) }{\Gamma((k+1)(\alpha+\gamma)+2)}\nonumber \\
 & \times \tilde{c}_{0}(0,2,n, \alpha, \gamma,(k+1)\alpha,(k+1)\gamma)
 \end{align}
 with
 \begin{align}
   \label{JPMCS2}
  \tilde{c}_{0}(0,2,n, \alpha, \gamma,(k+1)\alpha,(k+1)\gamma) &=  \binom{n+\alpha}{n}^{2} \nonumber\\
  &\hspace{-5.cm}\times F_{1:1;1}^{1:2;2} \left(\begin{array}{cc}
  (k+1)\alpha+1: -n, \alpha+\gamma+n+1; -n, \alpha+\gamma+n+1  &\\
  &; 1, 1 \\
  (k+1)(\alpha+\gamma)+2: \alpha+1; \alpha+1 & \\
  \end{array}\right)
   \end{align}
 for the power and Krein-like moments, respectively.

 \item Other exponential and logarithmic functionals.
 \\
 The logarithmic and exponential functionals of the Laguerre polynomials are defined as
\begin{equation}
\label{JLM1}
\langle (\log x)^{k}\rangle_{n}^{(J)} = \int_{\Delta}  (\log x)^{k}\rho^{(J)}_{n}(x)\, dx = \frac{d^{k}}{ds^{k}}\mathcal{J}_{n,n}^{(J)}(s,\alpha,\gamma,1)\Bigg|_{s=0}.
\end{equation}
\begin{equation}
\label{EFMJ1}
\langle x^{k}e^{-\alpha x}\rangle_{n}^{(J)} := \int_{\Delta}   x^{k}e^{-\alpha x}\rho^{(J)}_{n}(x)\, dx = \sum_{m=0}^{\infty}\frac{(-\alpha)^{m}}{m!}\mathcal{J}_{n,n}^{(J)}(k+m,\alpha,\gamma,1),
\end{equation}
and 
\begin{equation}
\label{WFMJ1}
\langle \omega^{k}(x)\log\omega(x)\rangle_{n}^{(J)} := \int_{\Delta}  \omega^{k+1}(x)\log\omega(x) \rho^{(J)}_{n}(x)\, dx = \frac{d}{dk}\mathcal{J}_{n,n}^{(J)}(0,\alpha,\gamma,k+1).
\end{equation}
From Eq. \eqref{GKJ} one has that they are explicitly given by
 \begin{equation}
 \label{JLM2}
 \langle (\log x)^{k}\rangle_{n}^{(J)} =  \frac{2^{\alpha+\gamma+1}\Gamma(\alpha+1)\Gamma(\gamma+1) }{\Gamma(\alpha+\gamma+2)}\frac{d^{k}}{ds^{k}}\tilde{c}_{0}(s,2,n, \alpha, \gamma,\alpha,\gamma)\Bigg|_{s=0}
 \end{equation}
 \begin{equation}
 \label{EFMJ2}
 \langle x^{k}e^{-\alpha x}\rangle_{n}^{(J)} = \frac{2^{\alpha+\gamma+1}\Gamma(\alpha+1)\Gamma(\gamma+1) }{\Gamma(\alpha+\gamma+2)}\sum_{m=0}^{\infty}\frac{(-\alpha)^{m}}{m!}\,\tilde{c}_{0}(k+m,2,n, \alpha, \gamma,\alpha,\gamma), \quad k+m\neq 0
 \end{equation}
 with
\begin{align}
  \label{JPMCS3}
 \tilde{c}_{0}(k+m,2,n, \alpha, \gamma,\alpha,\gamma) &= (\alpha+1)_{k+m} \binom{n+\alpha}{n}^{2} \frac{\alpha +\gamma +1}{(\alpha +\gamma  +1)_{k+m +1}}\nonumber\\
 &\hspace{-3.5cm}\times F_{1:1;1}^{1:2;2} \left(\begin{array}{cc}
 \alpha+k+m+1: -n, \alpha+\gamma+n+1; -n, \alpha+\gamma+n+1  &\\
 &; 1, 1 \\
 \alpha+\gamma+k+m+2: \alpha+1; \alpha+1 & \\
 \end{array}\right),
  \end{align}
 and 
 \begin{align}
 \label{WFMJ2}
 \langle \omega^{k}(x)\log\omega(x)\rangle_{n}^{(J)} &= \mathcal{J}_{n,n}^{(J)}(0,k+1) \Bigg\{\frac{d}{dk}\log \left[\tilde{c}_{0}(0,2,n, \alpha, \gamma,(k+1)\alpha,(k+1)\gamma) \right] \nonumber \\
 & +\alpha H_{\alpha(k+1)}+\gamma H_{\gamma(k+1)}-(\alpha+\gamma)(H_{1+(k+1)(\alpha+\gamma)}-\log 2)   \Bigg\},
 \end{align}
where $\mathcal{H}_{m}$ denotes the harmonic numbers defined as $\mathcal{H}_{m} = \sum_{i=1}^{m}\frac{1}{i}$.
\end{enumerate}

\section{Differential-equation approach to Krein-like $2$-functionals of HOPs}

In this section we describe a method to compute the standard Krein-like functionals $\mathcal{J}_{m,n}(s,\beta)$, given by Eq. \eqref{RF2}, i.e.
\begin{equation}
	\label{fi1}
\mathcal{J}_{m,n}(s,\beta)= \int_{a}^{b} [\omega(x)]^{\beta}\,x^{s} p_{m}(x)p_{n}(x)\, dx
\end{equation}
where $\{p_{n}(x)\}$ represents a family of HOPs of degree $n$ on $[a,b]$ with respect to the weight function $\omega(x)$. This computation will be done  by using the second-order hypergeometric differential equation of the involved HOPs, obtaining expressions for $\mathcal{J}_{m,n}(s,\beta)$ directly in terms of the polynomial coefficients of the differential equation. Then, the utility of these general expressions is illustrated by applying them to the Laguerre, Hermite and Jacobi polynomials.

We start with the differential equation of the HOPs $\{p_{n}(x)\}$ which has the form
\begin{equation}
	\label{hyeq}
	\sigma(x)\,p_{n}''(x) +\tau(x)\,p_{n}'(x) +\lambda \,p_{n}(x) = 0,
\end{equation}
where $\sigma(x)$ and $\tau(x)$ are polynomials whose degree is not greater than $2$ and $1$, respectively, and $\lambda$ is a constant. These polynomial solutions satisfy the orthogonality condition
\begin{equation}
	\label{orthp}
	\int_{a}^{b} p_{k}(x)p_{l}(x)\rho(x)\, dx = h_{k}\delta_{k,l}\,.
\end{equation}
The constant $h_{n}$ can be shown \cite{nikiforov1988} as
\begin{equation}
	h_{n} = (-1)^{n}n!\,a_{n}B_{n}\int_{a}^{b}(\sigma(x))^{n}\omega(x)\, dx ,
\end{equation}
where $a_n$ is the coefficient of the leading term in the explicit expression of $\{p_{n}(x)\}$ (namely, $p_{n}(x) = a_n x^n + b_n x^{n-1} + ....$) and $B_{n}$ is the normalization constant appearing in the Rodrigues' formula
\begin{equation}
	\label{rof}
	p_{n}(x)= \frac{B_{n}}{\omega(x)}\frac{d^{n}}{dx^{n}}[(\sigma(x))^{n}\omega(x)].
\end{equation}
It is interesting to collect here the following relation between the constants $a_n$ and $B_{n}$ given by
\begin{equation}
	a_n = \frac{B_{n}}{\omega(x)}\frac{d^{n}}{dx^{n}}[(\sigma(x))^{n}\omega(x)] .
\end{equation}
To compute the wanted Krein functionals $\mathcal{J}_{mjk}(\beta)$ we consider the linearization relation of the form
\begin{equation}
	\label{EXP1}
	x^{s}p_{m}(x) = \sum_{k=0}^{s+m}c_{msk}p_{k}(x),
\end{equation}
where in general the linearization coefficients $c_{jmn}$ are given \cite{ruiz1997} by  
\begin{align}
	\label{eecof}
	c_{jmn} &= \frac{(-1)^{n}B_{n}B_{j}m!}{h_{n}} \sum_{r=r_{-}}^{r_{+}} \frac{A_{rj}}{(m-n+r)!}\int_{a}^{b} x^{m-n+r}(\sigma(x))^{n-r}\frac{d^{j-r}}{dx^{j-r}}[(\sigma(x))^{j}\rho(x)]\, dx,
\end{align}
where $r_{+}=\min(n,j)$, $r_{-}=\max(0,n-m)$, and the coefficients $A_{mn}$ come up in the generalized Rodrigues' formula \cite{nikiforov1988}
\begin{equation}
	\label{grof}
	\frac{d^{m}}{dx^{m}}p_{n}(x)=\frac{A_{mn}B_{n}}{(\sigma(x))^{m}\rho(x)}\frac{d^{n-m}}{dx^{n-m}}[(\sigma(x))^{n}\omega(x)],
\end{equation}
with
\begin{align}
	\label{Anm}
	A_{mn}&= (-1)^{m}\prod_{k=0}^{m-1} (\lambda_{n}-\lambda_{k}) = \frac{n!}{(n-m)!}\prod_{k=0}^{m-1}\left[\tau'+\frac{1}{2}(n+k-1)\sigma'' \right],\quad 1\leq m \leq n \nonumber \\
	A_{0n} &= 1.
\end{align}
Then, substituting \eqref{EXP1} into \eqref{fi1} one has 
\begin{equation}
	\label{fi12}
\mathcal{J}_{m,n}(s,\beta) = \sum_{k=0}^{s+m}c_{msk} \int_{a}^{b} [\omega(x)]^{\beta}\,p_{k}(x)p_{n}(x)\, dx,
\end{equation}
which, by a proper change of variable $x \to y=a\,x, a = a(\beta)$, can be transformed as
\begin{equation}
	\label{fi13}
\mathcal{J}_{m,n}(s,\beta) = \sum_{k=0}^{s+m}c'_{msk} \int_{a}^{b} \omega(y)\,p_{k}(ay)p_{n}(ay)\, dy.
\end{equation}
Now, one uses the linearization relation of the corresponding orthogonal polynomials, i.e.,
\begin{equation}
	\label{lin_rel_eq}
	p_{k}(x)p_{n}(x)=\sum_{r=0}^{k+n}d(k,n,r)p_{r}(x),
\end{equation}
where in general the coefficients are given by \cite{artes1998}
\begin{align}
	\label{coeff_lin_eq}
	d(n,m,r)&=\gamma_{n}\gamma_{m}\gamma_{r}\frac{(-1)^{r}A_{r,0}}{h_{r}}\sum_{j=j_{-}}^{j_{+}}\binom{r}{j}(-1)^{j}A_{n,j}\sum_{l=l_{-}}^{n-j}\binom{n-j}{l}(-1)^{l}A_{m,n+r-2j-l}\nonumber \\
	& \times \int_{a}^{b}f_{m}(x)\frac{d^{m-n-r+2j+l}}{dx^{m-n-r+2j+l}}\left\{\sigma^{2j+l-r}(x)\frac{d^{l}}{dx^{l}}(\sigma^{r-j}(x)) \right\}\, dx,
\end{align}
with $f_{n}(x)=\sigma^{n}(x)\omega(x)$, $\gamma_{i}$ represents the coefficient associated to the $n$-th power of the hypergeometric-type polynomial (i.e., $p_{n}(x)=\gamma_{n}x^{n}+\ldots$), $j_{-}=\max(0,r-m)$ and $j_{+}=\min(r,n)$. 
Then,
\begin{align}
	\label{fi14}
	\mathcal{J}_{m,n}(s,\beta) &= \sum_{n=0}^{s+m}c'_{msk} \sum_{r=0}^{k+n} d(k,n,r)\int_{a}^{b} \omega(y)\,p_{r}(ay)\, dy,\nonumber \\
	&\equiv \sum_{n=0}^{s+m}c'_{msk} \sum_{r=0}^{n+k} d(k,n,r)\,\mathcal{I}_{r}(a)
\end{align}
where the remaining integral, $\mathcal{I}_{n}(a)$, can be solved straightforwardly for each specific family of polynomials. \\
This expression formally gives the Krein-like functionals in terms of the linearization coefficients, $c_{msk}$ and $d(k,n,r)$, which are respectively  given by Eqs. \eqref{eecof} and \eqref{coeff_lin_eq} in terms of the coefficients $\sigma(x)$ and $\tau(x)$ of the differential equation satisfied by the hypergeometric polynomials involved.\\
Let us now use this method for generalized Krein-like integral functionals of the Laguerre, Hermite and Jacobi polynomials.

%
\subsection{Laguerre polynomials }

\begin{theorem}
	Let $L_{m}^{(\alpha)}(x)$ denote the Laguerre polynomials orthogonal with respect to the weight function $\omega_{\alpha}(x)=x^{\alpha}e^{-x}$ on $(0,\infty)$ \cite{nikiforov1988,olver2010}. Then, we found that the Krein-like functionals of Laguerre polynomials $L_{n}^{(\alpha)}(x)$ defined as
	\begin{equation}
		\label{GKL1}
		\mathcal{J}^{(L)}_{m,n}(s,\alpha,\beta):=\int_{0}^{\infty} x^{s}[\omega_{\alpha}^{(L)}(x)]^{\beta}L_{m}^{(\alpha)}(x)L_{n}^{(\alpha)}(x)\, dx, \quad \alpha >-\frac{s+1}{\beta},\quad s\in\mathbb{R}_{+}
	\end{equation}
	are given by
	\begin{align}
		\label{L1}
		\mathcal{J}_{m,n}^{(L)} &=\frac{1}{\beta^{\alpha+1}}\sum_{j=0}^{m+\gamma}c_{m\gamma j}\sum_{k=0}^{j+n}d(j,n,k)\frac{\Gamma(\alpha+1)(\alpha+1)_{k}}{k!}\left(\frac{\beta -1}{\beta }\right)^k,
	\end{align}
	where
	\begin{align}
		\label{cL}
		c_{m\gamma j} &=\frac{ (-1)^{j+m}\gamma ! \Gamma (\gamma +1)}{\Gamma (j+\alpha +1)}\sum_{l=\max(0,j-\gamma)}^{\min(j,m)} \binom{j}{l} \frac{\Gamma (l+\alpha +\gamma +1)}{(m-l)! (\gamma -j+l)! \Gamma (l-m+\gamma +1)}\\
		\label{dL}
		d(j,n,k)&=\frac{(-1)^{j+n}}{j!n!}\sum_{i=i_{-}}^{i_{+}}\binom{j}{i}\binom{n}{k-i}\sum_{r=0}^{r_{+}}\binom{j-i}{r}\binom{n+i-k}{r}r!\frac{k!}{(-1)^{k}}\nonumber \\
		&\hspace{-1cm} \times (k+\alpha+1)_{r}(k-j+r+1)_{j-i-r}(k-n+r+1)_{n+i-k-r}.
	\end{align}
	with $\gamma=s+\alpha(\beta-1)$, $i_{-}=\max(0,k-n)$, $i_{+}=\min(k,j)$ and $r_{+}=\min(j-i,n+i-k)$.
\end{theorem}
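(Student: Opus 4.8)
The plan is to specialize the general differential-equation machinery of this section (Eqs.~\eqref{fi1}--\eqref{fi14}) to the Laguerre family, for which $\sigma(x)=x$, $\tau(x)=\alpha+1-x$ and $\omega_\alpha(x)=x^\alpha e^{-x}$, with $B_n$ and $A_{mn}$ read off from \eqref{rof} and \eqref{Anm}. The first move is to absorb the monomial factor into the weight so as to isolate a standard Laguerre weight. Writing $x^s[\omega_\alpha(x)]^\beta=x^{s+\alpha\beta}e^{-\beta x}=x^\gamma\,x^\alpha e^{-\beta x}$ with $\gamma:=s+\alpha(\beta-1)$ recasts the functional as $\mathcal{J}^{(L)}_{m,n}=\int_0^\infty x^\gamma\,x^\alpha e^{-\beta x}\,L_m^{(\alpha)}(x)L_n^{(\alpha)}(x)\,dx$. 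The only deviation from the pure orthogonality weight is the factor $e^{-\beta x}$ in place of $e^{-x}$; I would keep it inside the integral and dispose of it only at the very end, rather than performing the general change of variable, since for Laguerre the entire $\beta$-dependence collapses into one elementary integral.

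Next I would carry out the two linearizations in sequence. Applying the expansion \eqref{EXP1} with $s\mapsto\gamma$ gives $x^\gamma L_m^{(\alpha)}(x)=\sum_{j=0}^{m+\gamma}c_{m\gamma j}L_j^{(\alpha)}(x)$, and substituting the Laguerre data into the general coefficient \eqref{eecof} produces the explicit form \eqref{cL} for $c_{m\gamma j}$ (this needs $\gamma\in\mathbb{Z}_+$, consistent with the factorials $\gamma!$ and $(\gamma-j+l)!$ appearing there, so that the expansion terminates at $j=m+\gamma$). Then the product linearization \eqref{lin_rel_eq} gives $L_j^{(\alpha)}(x)L_n^{(\alpha)}(x)=\sum_{k=0}^{j+n}d(j,n,k)L_k^{(\alpha)}(x)$, and specializing the general coefficient \eqref{coeff_lin_eq} to $\sigma(x)=x$ and $f_n(x)=x^{n+\alpha}e^{-x}$ yields \eqref{dL}. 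Feeding both expansions into the integral and exchanging the (finite) sums with integration gives $\mathcal{J}^{(L)}_{m,n}=\sum_{j=0}^{m+\gamma}c_{m\gamma j}\sum_{k=0}^{j+n}d(j,n,k)\,\mathcal{I}_k(\beta)$, where $\mathcal{I}_k(\beta):=\int_0^\infty x^\alpha e^{-\beta x}L_k^{(\alpha)}(x)\,dx$.

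The last and cleanest step is the evaluation of $\mathcal{I}_k(\beta)$, which is where the power $\beta$ enters nontrivially. I would use the generating function $\sum_k L_k^{(\alpha)}(x)t^k=(1-t)^{-\alpha-1}\exp(-xt/(1-t))$: multiplying by $x^\alpha e^{-\beta x}$ and performing the elementary Gamma integral over $(0,\infty)$ gives $\sum_k \mathcal{I}_k(\beta)\,t^k=\Gamma(\alpha+1)\bigl(\beta-(\beta-1)t\bigr)^{-\alpha-1}=\dfrac{\Gamma(\alpha+1)}{\beta^{\alpha+1}}\bigl(1-\tfrac{\beta-1}{\beta}\,t\bigr)^{-\alpha-1}$, whose Taylor coefficients read off immediately as $\mathcal{I}_k(\beta)=\dfrac{\Gamma(\alpha+1)(\alpha+1)_k}{\beta^{\alpha+1}\,k!}\bigl(\tfrac{\beta-1}{\beta}\bigr)^k$. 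Substituting this into the double sum and pulling out the prefactor $\beta^{-\alpha-1}$ reproduces exactly \eqref{L1}.

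The main obstacle is not this analytic core but the bookkeeping in deriving \eqref{cL} and \eqref{dL} from the general formulas. The product-linearization coefficient \eqref{dL} requires tracking the nested sums generated by the generalized Rodrigues formula \eqref{grof} together with the derivatives of $\sigma^{r-j}=x^{r-j}$, and verifying that the limits $i_-=\max(0,k-n)$, $i_+=\min(k,j)$ and $r_+=\min(j-i,\,n+i-k)$ emerge exactly; similarly \eqref{cL} demands care with the inner range $\max(0,j-\gamma)\le l\le\min(j,m)$ and with the resulting Pochhammer and binomial structure. One must also keep $\gamma\in\mathbb{Z}_+$ throughout so that the expansion of $x^\gamma L_m^{(\alpha)}$ in the $L_j^{(\alpha)}$ basis is finite, which is what legitimizes interchanging the summations with the integral.
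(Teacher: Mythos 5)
Your proposal is correct and follows essentially the same route as the paper: both proofs absorb the extra monomial into the exponent $\gamma=s+\alpha(\beta-1)$, expand $x^{\gamma}L_{m}^{(\alpha)}$ via \eqref{EXP1} with coefficients \eqref{cL}, linearize the remaining product $L_{j}^{(\alpha)}L_{n}^{(\alpha)}$ with coefficients \eqref{dL}, and reduce everything to the single integral $\int_{0}^{\infty}x^{\alpha}e^{-\beta x}L_{k}^{(\alpha)}(x)\,dx$. The only (harmless) difference is in the last elementary step: the paper first rescales $y=\beta x$ and quotes a tabulated Gradshteyn--Ryzhik integral specialized to $L_{0}^{(\alpha)}=1$, whereas you evaluate the same quantity directly from the Laguerre generating function, obtaining the identical value $\Gamma(\alpha+1)(\alpha+1)_{k}\,\beta^{-\alpha-1}\bigl(\tfrac{\beta-1}{\beta}\bigr)^{k}/k!$.
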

\begin{proof}
	\label{Lp}
	We begin from \eqref{GKL1} and perform the change of variable $y=\beta x$ to obtain
	\begin{equation}
		\label{Li1}
		\mathcal{J}_{m,n}^{(L)} = \frac{1}{\beta^{s+\alpha\beta+1}}\int_{0}^{\infty} y^{s+\alpha\beta-\alpha}y^{\alpha}e^{-y}L_{m}^{(\alpha)}\left(\frac{y}{\beta} \right)L_{n}^{(\alpha)}\left(\frac{y}{\beta} \right)\, dy,
	\end{equation}
	with a slight modification needed to apply the expansion \eqref{EXP1} one has
	\begin{equation}
		\label{Li2}
		\mathcal{J}_{m,n}^{(L)} = \frac{1}{\beta^{\alpha+1}}\int_{0}^{\infty} \left(\underbrace{\frac{y}{\beta}}_{z}\right)^{s+\alpha(\beta-1)}y^{\alpha}e^{-y}L_{m}^{(\alpha)}\left(\frac{y}{\beta} \right)L_{n}^{(\alpha)}\left(\frac{y}{\beta} \right)\, dy
	\end{equation}
	so that 
	\begin{equation}
		\label{Lexp}
		z^{\gamma} L_{m}^{(\alpha)}(z) = \sum_{j=0}^{\gamma+m}c_{m\gamma j}L_{j}^{(\alpha)}(z),
	\end{equation}
	with $\gamma=s+\alpha(\beta-1)$. The linearization coefficients $c_{m\gamma j}$ can be obtained directly from \eqref{eecof}, obtaining the values given by $\eqref{cL}$.  Then, substituting \eqref{Lexp} into \eqref{Li2} one has
	\begin{equation}
		\label{Li3}
		\mathcal{J}_{m,n}^{(L)} = \frac{1}{\beta^{\alpha+1}}\sum_{j=0}^{m+\gamma}c_{m\gamma j}\int_{0}^{\infty} L_{j}^{(\alpha)}\left(\frac{y}{\beta} \right)L_{n}^{(\alpha)}\left(\frac{y}{\beta} \right)y^{\alpha}e^{-y}\, dy.
	\end{equation}
Now, we use the linearization relation for the product of two Laguerre polynomials whose coefficients have been obtained following the method previously described \cite{ruiz1999}, i.e.,
\begin{equation}
\label{lin_2lag}
L_{n}^{(\alpha)}(x)L_{m}^{(\beta)}(x)=\sum_{k=0}^{n+m}d(n,m,k)L_{k}^{(\delta)}(x),
\end{equation}
with
\begin{align}
\label{coeff_2lag}
d(n,m,k)&=\frac{(-1)^{n+m}}{m!n!}\sum_{j=j_{-}}^{j_{+}}\binom{n}{j}\binom{m}{k-j}\sum_{r=0}^{r_{+}}\binom{n-j}{r}\binom{m+j-k}{r}r!\nonumber \\
&\hspace{-1.5cm} \times (k+\delta+1)_{r}(\delta-\alpha+k-n+r+1)_{n-j-r}(\delta-\beta+k-m+r+1)_{m+j-k-r}\frac{k!}{(-1)^{k}}.
\end{align}
Then, we obtain
\begin{equation}
\label{Li31}
\mathcal{J}_{m,n}^{(L)} = \frac{1}{\beta^{\alpha+1}}\sum_{j=0}^{m+\gamma}c_{m\gamma j}\sum_{k=0}^{j+n}d(j,n,k)\int_{0}^{\infty} L_{k}^{(\delta)}\left(\frac{y}{\beta} \right)y^{\alpha}e^{-y}\, dy
\end{equation}
which, for $\delta=\alpha$ and taking into account that $L_{0}^{(\alpha)}(y/\beta)=1$, can also be written as
\begin{equation}
\label{Li32}
\mathcal{J}_{m,n}^{(L)} = \frac{1}{\beta^{\alpha+1}}\sum_{j=0}^{m+\gamma}c_{m\gamma j}\sum_{k=0}^{j+n}d(j,n,k)\int_{0}^{\infty} L_{k}^{(\alpha)}\left(\frac{y}{\beta} \right)L_{0}^{(\alpha)}\left(\frac{y}{\beta} \right)y^{\alpha}e^{-y}\, dy,
\end{equation}
	which allows as to use the following integral \cite{ryzhik2007}
	\begin{align}
		\label{Lint}
		\int_{0}^{\infty} L_{k}^{(\alpha)}\left(\frac{y}{\beta} \right)L_{n}^{(\alpha)}\left(\frac{y}{\beta} \right)y^{\alpha}e^{-y}\, dy &= \frac{\Gamma(\alpha+1)(\alpha+1)_{k}(\alpha+1)_{n}}{k!n!}\nonumber \\
		& \times \sum_{l=0}^{k}\sum_{i=0}^{n}\frac{(-k)_{l}(-n)_{i}(l+\alpha+1)_{i} }{(\alpha+1)_{i}l!i!}\beta^{-l-i}.
	\end{align}
Indeed, this integral get simplified when $n=0$, i.e.,
	\begin{align}
\label{Lint2}
\int_{0}^{\infty} L_{k}^{(\alpha)}\left(\frac{y}{\beta} \right)L_{0}^{(\alpha)}\left(\frac{y}{\beta} \right)y^{\alpha}e^{-y}\, dy &= \frac{\Gamma (\alpha +1) (\alpha +1)_k}{k!}\left(\frac{\beta -1}{\beta }\right)^k,
\end{align}
and substituting \eqref{Lint2} into \eqref{Li32} one finally has
\begin{align}
\label{Li33}
\mathcal{J}_{m,n}^{(L)} &= \frac{1}{\beta^{\alpha+1}}\sum_{j=0}^{m+\gamma}c_{m\gamma j}\sum_{k=0}^{j+n}d(j,n,k)\frac{\Gamma(\alpha+1)(\alpha+1)_{k}}{k!} \sum_{l=0}^{k}\frac{(-k)_{l} }{l!}\beta^{-l}\nonumber \\
&=\frac{1}{\beta^{\alpha+1}}\sum_{j=0}^{m+\gamma}c_{m\gamma j}\sum_{k=0}^{j+n}d(j,n,k)\frac{\Gamma(\alpha+1)(\alpha+1)_{k}}{k!}\left(\frac{\beta -1}{\beta }\right)^k.
\end{align}

\end{proof} 

\subsection{Hermite polynomials }

\begin{theorem}
	Let $H_{m}(x)$ denote the Hermite polynomials orthogonal with respect to the weight function $\omega(x)=e^{-x^{2}}$ on $(-\infty,\infty)$. Then, the (standard) Krein-like functionals of the Hermite polynomials defined by
	\begin{equation}
		\label{gKH1}
		\mathcal{J}^{(H)}_{m,n}(s,\beta):=\int_{-\infty}^{\infty} x^{s}[\omega^{(H)}(x)]^{\beta}H_{m}(x)H_{n}(x)\, dx, \quad s\in \mathbb{Z}_{+}
	\end{equation} are given by
	\begin{align}
		\label{H1}
		\mathcal{J}_{m,n}^{(H)} &= \frac{1}{\sqrt{\beta}}\sum_{j=0}^{m+s}c_{msj}\sum_{k=0}^{\min(J,n)}\binom{J}{k}\binom{n}{k}\frac{2^{-2 j- k+m+n+s} \pi\, k! }{\Gamma \left(\frac{1+2(j+k)-(m+n)-s}{2} \right)}\left(\frac{\beta -1}{\beta }\right)^{\frac{-2( j+ k)+m+n+s}{2}},
	\end{align}
	with $J=m+s-2j$ and where
	\begin{equation}
		\label{cH}
		c_{msj} =\frac{2^{m-s} m! s! }{(m+s-2 j)!}\sum_{k=\max(0,m-2j)}^{\min(m+s-2j,m)} \binom{m+s-2 j}{k}\frac{1}{2^k (m-k)! (k+j-m)!}
	\end{equation}
\end{theorem}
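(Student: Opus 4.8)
The plan is to follow the general three-step scheme of this section — connection formula, product linearization, elementary moment — but to exploit the fact that for the Gaussian weight no auxiliary rescaling of the parameter is required, so that all of the $\beta$-dependence can be isolated in a single closed-form integral. First I would expand the kernel $x^{s}H_{m}(x)$ in the Hermite basis by means of the connection relation obtained from \eqref{eecof},
\[
x^{s}H_{m}(x)=\sum_{j}c_{msj}\,H_{m+s-2j}(x),\qquad J:=m+s-2j,
\]
where the parity of $x^{s}H_{m}$ forces only polynomials of degree congruent to $m+s\pmod 2$ to appear, and the coefficients $c_{msj}$ reduce to \eqref{cH}. Note that $c_{msj}$ carries the factor $1/(m+s-2j)!$, which vanishes once $m+s-2j<0$, so the summation may harmlessly be written up to $j=m+s$. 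Substituting this into \eqref{gKH1} turns the functional into $\mathcal{J}_{m,n}^{(H)}=\sum_{j}c_{msj}\int_{-\infty}^{\infty}e^{-\beta x^{2}}H_{J}(x)H_{n}(x)\,dx$; since the connection relation is a polynomial identity, no change of variable is actually needed, and the only $\beta$-dependence now sits inside the remaining integral.

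Next I would linearize the product of the two surviving Hermite polynomials with the classical formula (a special instance of \eqref{lin_rel_eq}--\eqref{coeff_lin_eq}),
\[
H_{J}(x)H_{n}(x)=\sum_{k=0}^{\min(J,n)}\binom{J}{k}\binom{n}{k}2^{k}k!\,H_{J+n-2k}(x),
\]
which already supplies the factor $\binom{J}{k}\binom{n}{k}k!$ appearing in \eqref{H1}. This reduces everything to the elementary Gaussian moment of a single Hermite polynomial, which I would obtain from the generating function $\sum_{r\ge0}\frac{t^{r}}{r!}\int_{-\infty}^{\infty}e^{-\beta x^{2}}H_{r}(x)\,dx=\sqrt{\pi/\beta}\,\exp\!\big(\tfrac{1-\beta}{\beta}t^{2}\big)$, whence the odd moments vanish and $\int_{-\infty}^{\infty}e^{-\beta x^{2}}H_{2q}(x)\,dx=\frac{(2q)!}{q!}\sqrt{\pi/\beta}\,\big(\tfrac{1-\beta}{\beta}\big)^{q}$ with $2q=J+n-2k$. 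The global prefactor $1/\sqrt{\beta}$ in \eqref{H1} is precisely the $\sqrt{\pi/\beta}$ produced here.

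Finally I would recast the factorials into the Gamma form stated in the theorem. Writing $q=\tfrac{J+n}{2}-k=\tfrac{(m+n+s)-2(j+k)}{2}$ and using the reflection identity $\frac{1}{\Gamma(\frac12-q)}=\frac{(2q)!}{(-4)^{q}q!\sqrt{\pi}}$, the combination $\frac{(2q)!}{q!}\big(\tfrac{1-\beta}{\beta}\big)^{q}$ becomes $\frac{\sqrt{\pi}}{\Gamma(\frac12-q)}\,2^{2q}\big(\tfrac{\beta-1}{\beta}\big)^{q}$ after absorbing the sign $(-1)^{q}$ from $(-4)^{q}$ into $\big(\tfrac{\beta-1}{\beta}\big)^{q}$; here $\tfrac12-q=\tfrac{1+2(j+k)-(m+n)-s}{2}$ is exactly the Gamma-argument in \eqref{H1}. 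Collecting the powers of $2$ (namely $2^{k}$ from the product formula times $2^{2q}=2^{J+n-2k}$ gives $2^{J+n-k}=2^{-2j-k+m+n+s}$), pairing the surviving $\sqrt{\pi}$ with the $\sqrt{\pi/\beta}$ to produce the lone $\pi$ and the front factor $1/\sqrt{\beta}$, and reinstating $c_{msj}$ together with the summations over $j$ and $k$, yields \eqref{H1}.

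The bookkeeping in the last step is where I expect the main difficulty to lie: one must track simultaneously the sign $(-1)^{q}$, the several powers of $2$, and the passage from the even-degree restriction $r=2q$ to the unrestricted sum over $k$. What makes this painless is the observation that the half-integer argument of $\Gamma$ in the denominator does double duty — for odd $J+n-2k$ its argument is a non-positive integer, so $1/\Gamma=0$ automatically enforces the parity selection rule, while for even $J+n-2k$ the reflection formula repackages the ratio of factorials. Verifying that the specialization of \eqref{eecof} genuinely produces \eqref{cH} is routine but tedious, and I would treat it as a direct consequence of the iterated recurrence $xH_{r}=\tfrac12 H_{r+1}+rH_{r-1}$.
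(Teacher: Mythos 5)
Your proof is correct and follows essentially the same route as the paper: expand $x^{s}H_{m}(x)$ in the Hermite basis with the coefficients \eqref{cH}, linearize the product $H_{J}H_{n}$ with the classical formula, and reduce everything to the Gaussian moment of a single Hermite polynomial. The only differences are cosmetic — you skip the paper's rescaling $y=\sqrt{\beta}\,x$ and obtain the final integral from the generating function together with the reflection formula rather than quoting it from a table — and your bookkeeping of the powers of $2$, the sign absorbed into $\bigl(\tfrac{\beta-1}{\beta}\bigr)^{q}$, and the $\Gamma$-argument (including its role in enforcing the parity selection rule) all check out against \eqref{H1}.
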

\begin{proof}
	\label{Hp}
	We start by doing the change of variable $y=\sqrt{\beta}x$ in \eqref{gKH1}, obtaining
	\begin{equation}
		\label{Hi1}
		\mathcal{J}_{m,n}^{(H)} = \frac{1}{\beta^{\frac{s+1}{2}}}\int_{-\infty}^{\infty} y^{s}e^{-y^{2}}H_{m}\left( \frac{y}{\sqrt{\beta}}\right)H_{n}\left( \frac{y}{\sqrt{\beta}}\right)\, dy.
	\end{equation}
	In order to have an expansion of the type \eqref{EXP1}, we slightly rewrite the integral as
	\begin{equation}
		\label{Hi2}
		\mathcal{J}_{m,n}^{(H)} = \frac{1}{\sqrt{\beta}}\int_{-\infty}^{\infty} \left(\underbrace{\frac{y}{\sqrt{\beta}}}_{z}\right)^{s}e^{-y^{2}}H_{m}\left( \frac{y}{\sqrt{\beta}}\right)H_{n}\left( \frac{y}{\sqrt{\beta}}\right)\, dy
	\end{equation}
	so as to have
	\begin{equation}
		\label{exp2}
		z^{s}H_{m}(z) = \sum_{j=0}^{m+s}c_{msj}H_{m+s-2j}(z),
	\end{equation}
	The linearization coefficients $c_{msj}$ can be obtained directly from \eqref{eecof}, with the corresponding values given by $\eqref{cH}$. Thus, the integral turns out to be
	\begin{equation}
		\label{Hi3}
		\mathcal{J}_{m,n}^{(H)} = \frac{1}{\sqrt{\beta}}\sum_{j=0}^{m+s}c_{msj}\int_{-\infty}^{\infty} H_{m+s-2j}\left( \frac{y}{\sqrt{\beta}}\right)H_{n}\left( \frac{y}{\sqrt{\beta}}\right)e^{-y^{2}}\, dy.
	\end{equation}
	Now, we apply the linearization formula for the product of two Hermite polynomials whose coefficients are obtained again following the procedure described above \cite{artes1998}
	\begin{equation}
		\label{linforH}
		H_{J}\left( \frac{y}{\sqrt{\beta}}\right)H_{n}\left( \frac{y}{\sqrt{\beta}}\right) =\sum_{k=0}^{\min(J,n)}\binom{J}{k}\binom{n}{k}2^{k}k! \, H_{J+n-2k}\left( \frac{y}{\sqrt{\beta}}\right) 
	\end{equation}
	with $J=m+s-2j$ to have
	\begin{equation}
		\label{Hi4}
		\mathcal{J}_{m,n}^{(H)} = \frac{1}{\sqrt{\beta}}\sum_{j=0}^{m+s}c_{msj}\sum_{k=0}^{\min(J,n)}\binom{J}{k}\binom{n}{k}2^{k}k!\int_{-\infty}^{\infty}H_{J+n-2k}\left( \frac{y}{\sqrt{\beta}}\right) e^{-y^{2}}\, dy
	\end{equation}
	where the integral is given by \cite{ryzhik2007}
\begin{align}
\label{Hii}
	\int_{-\infty}^{\infty}H_{J+n-2k}\left( \frac{y}{\sqrt{\beta}}\right) e^{-y^{2}}\, dy &=\frac{2^{-2 j-2 k+m+n+s} \pi }{\Gamma \left(\frac{1+2(j+k)-(m+n)-s}{2} \right)}\left(\frac{\beta -1}{\beta }\right)^{\frac{-2( j+ k)+m+n+s}{2}}.
\end{align}
 Thus, taking \eqref{Hii} into \eqref{Hi4} one has the wanted \eqref{H1} of theorem.
\end{proof}

\subsection{Jacobi polynomials }

\begin{theorem}
	Let $P_{m}^{(\alpha,\gamma)}(x)$ denote the Jacobi polynomials orthogonal with respect to the weight function $\omega_{\alpha,\gamma}(x)=(1-x)^{\alpha}(1+x)^{\gamma}$ on $(-1,1)$. Then, the (standard) Krein-like functionals of the Jacobi polynomials defined by 
	\begin{equation}
		\label{GKJ1}
		\mathcal{J}^{(J)}_{m,n}(s,\alpha,\gamma,\beta):=\int_{-1}^{1} x^{s}[\omega_{\alpha,\gamma}^{(J)}(x)]^{\beta}P_{m}^{(\alpha,\gamma)}(x)P_{n}^{(\alpha,\gamma)}(x)\, dx, \quad s\in\mathbb{Z}_{+}
	\end{equation}
	can be expressed as
	\begin{align}
		\label{P1}
		\mathcal{J}_{m,n}^{(P)} &= \sum_{k=0}^{n}\sum_{j=0}^{m}\sum_{i=0}^{k+s}A_{k}(\alpha,\gamma)A_{j}(\alpha,\gamma)c_{ksi}\frac{2^{\Lambda+\Delta+1}\Gamma(\Lambda+j+1)\Gamma(\Delta+j+1)}{j!(\Lambda+\Delta+2j+1)\Gamma(\Lambda+\Delta+j+1)}\delta_{i,j},
	\end{align}
	where
	\begin{align}
		\label{Ai}
		A_{i}(\alpha,\gamma) &=  \frac{(-1)^{i-k} (\Delta +2 k+\Lambda +1) \Gamma (i+\gamma +1) \Gamma (i+\Lambda +1) \Gamma (k+\Delta +\Lambda +1) \Gamma (i+k+\alpha +\gamma +1)}{(i-k)! \Gamma (k+\gamma +1) \Gamma (k+\Lambda +1) \Gamma (i+\alpha +\gamma +1) \Gamma (i+k+\Delta +\Lambda +2)} \nonumber\\
		&\times \, _3F_2(k-i,-\alpha -i,\Delta +k+1;-i-\Lambda ,\gamma +k+1;1),\quad i=k,j
	\end{align}
	and
	\begin{align}
		\label{cP}
		c_{ksi} &= \frac{(-1)^{k+s-i}2^{i}s!(2i+\Lambda+\Delta+1)\Gamma(i+\Lambda+\Delta+1)\Gamma(k+\Lambda+1)\Gamma(k+\Delta+1) }{\Gamma(i+\Lambda+1)\Gamma(i+\Delta+1)\Gamma(k+\Lambda+\Delta+1) }\nonumber \\
		&\times \sum_{r=r_{-}}^{r_{+}} \binom{i}{r} \frac{\Gamma(k+r+\Lambda+\Delta+1)}{2^{r}(s-i+r)!\Gamma(2i+k-r+\Lambda+\Delta+2) }\nonumber \\
		& \times \sum_{l=0}^{k-r}\frac{(-1)^{l}\Gamma(i+k-r+\Lambda-l+1)\Gamma(i+\Delta+l+1) }{l!(k-r-l)!\Gamma(k+\Lambda-l+1)\Gamma(r+\Delta+l+1)}\nonumber\\
		& \times {}_2F_1\left(i-s-r,i+\Delta+l+1;2i+k-r+\Lambda+\Delta+2;2\right),
	\end{align}
	with $\Lambda = \alpha\beta$, $\Delta=\gamma\beta$, $r_{+}=\min(i,k)$ and $r_{-}=\max(0,i-s)$.
\end{theorem}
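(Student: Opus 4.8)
The plan is to follow the general differential-equation scheme of Eqs.~\eqref{EXP1}--\eqref{fi14}, with one essential adaptation forced by the algebra of the Jacobi weight. Writing $[\omega_{\alpha,\gamma}(x)]^{\beta}=(1-x)^{\Lambda}(1+x)^{\Delta}$ with $\Lambda=\alpha\beta$ and $\Delta=\gamma\beta$, I first note that, in contrast with the Laguerre and Hermite cases, no change of variable can restore the original weight: a rescaling of $x$ does not preserve $(-1,1)$, and the two factors $(1-x)$ and $(1+x)$ cannot be normalized simultaneously. On the other hand $(1-x)^{\Lambda}(1+x)^{\Delta}$ is itself the Jacobi weight with the shifted parameters $(\Lambda,\Delta)$, so the natural route is to re-expand the entire integrand in the orthogonal family $\{P_i^{(\Lambda,\Delta)}\}$ and then apply the orthogonality relation \eqref{orthp} for that family rather than for $\{P_i^{(\alpha,\gamma)}\}$.

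Concretely, I would perform three successive expansions. First, using the connection formula between the two Jacobi families I expand each polynomial occurring in \eqref{GKJ1},
\[
P_n^{(\alpha,\gamma)}(x)=\sum_{k=0}^{n}A_k(\alpha,\gamma)\,P_k^{(\Lambda,\Delta)}(x),\qquad
P_m^{(\alpha,\gamma)}(x)=\sum_{j=0}^{m}A_j(\alpha,\gamma)\,P_j^{(\Lambda,\Delta)}(x),
\]
where the connection coefficients $A_i(\alpha,\gamma)$ follow from the general Rodrigues-based formula \eqref{eecof} together with \eqref{grof}--\eqref{Anm} and collapse to the terminating ${}_3F_2$ at unit argument displayed in \eqref{Ai}. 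Second, I linearize $x^{s}P_k^{(\Lambda,\Delta)}(x)$ in the same basis via \eqref{EXP1}--\eqref{eecof},
\[
x^{s}P_k^{(\Lambda,\Delta)}(x)=\sum_{i=0}^{k+s}c_{ksi}\,P_i^{(\Lambda,\Delta)}(x),
\]
the coefficient $c_{ksi}$ being the closed form \eqref{cP}, in which a terminating ${}_2F_1$ evaluated at argument $2$ appears. Third, I substitute both expansions into \eqref{GKJ1} and integrate term by term against $(1-x)^{\Lambda}(1+x)^{\Delta}$.

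At this last step the orthogonality \eqref{orthp} of $\{P_i^{(\Lambda,\Delta)}\}$ acts on the pair $P_i^{(\Lambda,\Delta)}P_j^{(\Lambda,\Delta)}$ and yields the factor $h_j^{(\Lambda,\Delta)}\,\delta_{i,j}$ with the standard Jacobi norm
\[
h_j^{(\Lambda,\Delta)}=\frac{2^{\Lambda+\Delta+1}\,\Gamma(\Lambda+j+1)\,\Gamma(\Delta+j+1)}{j!\,(\Lambda+\Delta+2j+1)\,\Gamma(\Lambda+\Delta+j+1)}.
\]
The Kronecker delta identifies the linearization index $i$ with the connection index $j$ of $P_m^{(\alpha,\gamma)}$, leaving exactly the triple sum over $k$, $j$ and $i$ recorded in \eqref{P1}. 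I would state at the outset the admissibility conditions $\Lambda,\Delta>-1$, i.e. $\alpha,\gamma>-1/\beta$, together with $s\in\mathbb{Z}_{+}$, which guarantee convergence of every integral and termination of every series.

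The main obstacle is computational rather than structural: it is the reduction of the Rodrigues-type integrals in \eqref{eecof} to the explicit hypergeometric forms \eqref{Ai} and \eqref{cP}. For the connection coefficients this amounts to carrying out the double sum produced by \eqref{eecof} and \eqref{grof} and recognizing it as a Gauss-type ${}_3F_2$ at unit argument. For the linearization coefficients $c_{ksi}$ the extra factor $x^{s}$ shifts the order of the Rodrigues differentiation and, after expanding the monomial about the endpoints of $(-1,1)$, produces a ${}_2F_1$ with argument $2$; since this is not a unit argument, no Chu--Vandermonde or Gauss summation collapses it and the coefficient must be retained in hypergeometric form. Once these two building blocks are established, the assembly through orthogonality is immediate and produces \eqref{P1}.
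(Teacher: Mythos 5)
Your proposal follows essentially the same route as the paper's proof: expand both $P_m^{(\alpha,\gamma)}$ and $P_n^{(\alpha,\gamma)}$ in the shifted family $\{P_i^{(\Lambda,\Delta)}\}$ via the connection coefficients $A_i(\alpha,\gamma)$, linearize $x^{s}P_k^{(\Lambda,\Delta)}$ in the same basis with the coefficients $c_{ksi}$, and finish with the orthogonality relation of $P_i^{(\Lambda,\Delta)}$ producing the norm factor and the Kronecker delta. The observation that no change of variable is available for the Jacobi weight, and the explicit admissibility conditions $\Lambda,\Delta>-1$, are correct and in fact slightly sharpen the paper's presentation.
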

\begin{proof}
	\label{Pp}
	We start from \eqref{GKJ1} 
	\begin{equation}
		\label{Pi1}
		\mathcal{J}_{m,n}^{(P)} = \int_{-1}^{1}x^{s}(1-x)^{\alpha\beta}(1+x)^{\gamma\beta}P_{m}^{(\alpha,\gamma)}(x)P_{n}^{(\alpha,\gamma)}(x)\, dx
	\end{equation}
	and consider the expansion of the Jacobi polynomials in terms of the parameters, i.e.,
	\begin{align}
		\label{Pexp}
		P_{i}^{(\alpha,\gamma)}(x) &= \sum_{k=0}^{i}\frac{(-1)^{i-k} (\Delta +2 k+\Lambda +1) \Gamma (i+\gamma +1) \Gamma (i+\Lambda +1) \Gamma (k+\Delta +\Lambda +1) \Gamma (i+k+\alpha +\gamma +1)}{(i-k)! \Gamma (k+\gamma +1) \Gamma (k+\Lambda +1) \Gamma (i+\alpha +\gamma +1) \Gamma (i+k+\Delta +\Lambda +2)} \nonumber\\
		&\times \, _3F_2(k-i,-\alpha -i,\Delta +k+1;-i-\Lambda ,\gamma +k+1;1)P_{k}^{(\alpha\beta,\gamma\beta)}(x).
	\end{align}
	Then, \eqref{Pi1} turns out to be
	\begin{equation}
		\label{Pi2}
		\mathcal{J}_{m,n}^{(P)} = \sum_{k=0}^{n}\sum_{j=0}^{m}A_{k}(\alpha,\gamma)A_{j}(\alpha,\gamma)\int_{-1}^{1}x^{s}(1-x)^{\Lambda}(1+x)^{\Delta}P_{k}^{(\Lambda,\Delta)}(x)P_{j}^{(\Lambda,\Delta)}(x)\, dx
	\end{equation}
	with $\Lambda = \alpha\beta$ and $\Delta=\gamma\beta$. Thus, we use the following expansion of \eqref{EXP1}-type 
	\begin{equation}
		\label{Pexp1}
		x^{s}P_{k}^{(\Lambda,\Delta)}(x) = \sum_{i=0}^{k+s}c_{ksi}P_{i}^{(\Lambda,\Delta)}(x),
	\end{equation}
	(The linearization coefficients $c_{ksi}$ can be obtained directly from \eqref{eecof}, with the values given by $\eqref{cP}$.) to obtain
	\begin{equation}
		\label{Pi3}
		\mathcal{J}_{m,n}^{(P)} = \sum_{k=0}^{n}\sum_{j=0}^{m}A_{k}(\alpha,\gamma)A_{j}(\alpha,\gamma)\sum_{i=0}^{k+s}c_{ksi}\int_{-1}^{1}P_{i}^{(\Lambda,\Delta)}(x)P_{j}^{(\Lambda,\Delta)}(x)(1-x)^{\Lambda}(1+x)^{\Delta}\, dx
	\end{equation}
	where clearly the integral is the orthogonality relation of the Jacobi polynomials
	\begin{equation}
		\label{Portho}
		\int_{-1}^{1}P_{i}^{(\Lambda,\Delta)}(x)P_{j}^{(\Lambda,\Delta)}(x)(1-x)^{\Lambda}(1+x)^{\Delta}\, dx = \frac{2^{\Lambda+\Delta+1}\Gamma(\Lambda+j+1)\Gamma(\Delta+j+1)}{j!(\Lambda+\Delta+2j+1)\Gamma(\Lambda+\Delta+j+1)}\delta_{i,j}.
	\end{equation}
	Now, by taking \eqref{Portho} into \eqref{Pi3} one finally has the wanted expression \eqref{P1} of the theorem, where the coefficients $A_{i}(\alpha,\gamma)$ are given in \eqref{Ai}.
\end{proof}

\section{Algebraic approach to Krein-like $2$-functionals of HOPs }
In this section we compute the functionals $\mathcal{J}_{m,n}(s,\beta)$, given by Eq. \eqref{RF2}, for the Laguerre, Hermite and Jacobi polynomials by means of various characterizations of these HOPs, obtaining explicit expressions in terms of the degrees and the characteristic parameters of the associated weight functions. Then, these expressions are applied for the evaluation of the moments of power, Krein-like and logarithmic types.

\subsection{Laguerre polynomials}

\begin{theorem}
Let $L_{m}^{(\alpha)}(x)$ denote the Laguerre polynomials orthogonal with respect to the weight function $\omega_{\alpha}(x)=x^{\alpha}e^{-x}$ on $(0,\infty)$ \cite{nikiforov1988,olver2010}. Then, the (standard) Krein-like functionals of the Laguerre polynomials defined by 	
\begin{equation}
	\label{GKL1}
	\mathcal{J}^{(L)}_{m,n}(s,\alpha,\beta):=\int_{0}^{\infty} x^{s}[\omega_{\alpha}^{(L)}(x)]^{\beta}L_{m}^{(\alpha)}(x)L_{n}^{(\alpha)}(x)\, dx, \quad \alpha >-\frac{s+1}{\beta},\quad s\in\mathbb{R}_{+}
	\end{equation}
	are given by
	\begin{align}
	\label{GKL2}
	\hspace{-2cm} \mathcal{J}_{m,n}^{(L)}(s,\alpha,\beta)&  =(-2)^{n+m}\, \beta^{-1-s-\alpha
		\beta}\, \sum_{k=|n-m|}^{n+m} \left\{ \frac{(-1)^k k!}{2^k
		(m+n-k)!}\,
	\sum_{j=\max\{0,n-k,m-k\}}^{\left[\frac{m+n-k+1}{2}\right]}
	\frac{\left(\frac{k-m-n}{2}\right)_j
		\left(\frac{k-m-n+1}{2}\right)_j
		(\alpha+k+1)_j}{j! \, \Gamma(k-n+1+j)\,\Gamma(k-m+1+j) } \right\} \nonumber \\
	&\times \sum_{t=0}^{k} \frac{(-1)^t}{t!} {k+\alpha \choose k-t}
	\beta^{-t} \Gamma(\alpha \beta+s+t+1)\nonumber \\
	&\hspace{-2cm} =\frac{(-2)^{n+m}\, \beta^{-1-s-\alpha \beta}\,\Gamma(\alpha
		\beta+s+1)}{\Gamma(\alpha+1)} \sum_{k=|n-m|}^{n+m} \left\{
	\frac{(-1)^k \Gamma(\alpha+k+1)\, _2F_1\left(-k,\alpha \beta+s+1;
		\alpha +1;1/\beta \right)}{2^k (m+n-k)!}\, \right.\nonumber \\
	& \times
	\left. \sum_{j=\max\{0,n-k,m-k\}}^{\left[\frac{m+n-k+1}{2}\right]}
	\frac{\left(\frac{k-m-n}{2}\right)_j
		\left(\frac{k-m-n+1}{2}\right)_j
		(\alpha+k+1)_j}{j! \, \Gamma(k-n+1+j)\,\Gamma(k-m+1+j) } \right\}.
	\end{align}
	In case that $m=n$, this expression reduces to
	\begin{align}
	\label{GKL3}
	\mathcal{J}_{n,n}^{(L)}(s,\alpha,\beta)&=2^{2n}\, \beta^{-1-s-\alpha \beta}\,
	\sum_{k=0}^{2n} \left\{ \frac{(-1)^k k!}{2^k (2n-k)!}\,
	\sum_{j=\max\{0,n-k\}}^{\left[n-k/2+1/2\right]}
	\frac{\left(\frac{k}{2}-n\right)_j \left(\frac{k+1}{2}-n\right)_j
		(k+\alpha+1)_j}{j!\, \Gamma^2(k-n+1+j) } \right\} \nonumber \\
	&\times \sum_{t=0}^{k} \frac{(-1)^t}{t!} {k+\alpha \choose k-t}
	\beta^{-t} \Gamma(\alpha \beta+s+t+1) \nonumber \\
	&=\frac{2^{2n}\, \beta^{-1-s-\alpha \beta}\,\Gamma(\alpha
		\beta+s+1)}{\Gamma(\alpha+1)} \sum_{k=0}^{2n} \left\{ \frac{(-1)^k
		\Gamma(\alpha+k+1)\, _2F_1\left(-k,\alpha \beta +s+1;
		\alpha+1;1/\beta \right)}{2^k (2n-k)!}\, \right. \nonumber \\
	& \times
	\left. \sum_{j=\max\{0,n-k\}}^{\left[n-k/2+1/2\right]}
	\frac{\left(\frac{k}{2}-n\right)_j \left(\frac{k+1}{2}-n\right)_j
		(k+\alpha+1)_j}{j! \,\Gamma^2(k-n+1+j) } \right\}.
	\end{align}
\end{theorem}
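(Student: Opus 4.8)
The plan is to reduce the product of the two Laguerre polynomials to a single Laguerre polynomial by means of a classical product-linearization formula, and then to integrate term by term against the modified weight obtained after absorbing the power $[\omega_\alpha(x)]^\beta$. First I would write $[\omega_\alpha(x)]^\beta=x^{\alpha\beta}e^{-\beta x}$, so that
\[
\mathcal{J}_{m,n}^{(L)}(s,\alpha,\beta)=\int_0^\infty x^{s+\alpha\beta}e^{-\beta x}L_m^{(\alpha)}(x)L_n^{(\alpha)}(x)\,dx,
\]
the integrability at the origin being exactly the content of the hypothesis $\alpha>-(s+1)/\beta$, i.e. $s+\alpha\beta>-1$. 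Next I would insert the linearization (product) formula for two Laguerre polynomials of equal parameter,
\[
L_m^{(\alpha)}(x)L_n^{(\alpha)}(x)=(-2)^{m+n}\sum_{k=|m-n|}^{m+n}\frac{(-1)^k k!}{2^k(m+n-k)!}\,\Sigma_k\,L_k^{(\alpha)}(x),
\]
where $\Sigma_k$ denotes the finite $j$-sum appearing in \eqref{GKL2}; this closed form is the algebraic characterization of the Laguerre family on which the whole argument rests, and its finite range $|m-n|\le k\le m+n$ reflects that $L_m^{(\alpha)}L_n^{(\alpha)}$ has degree $m+n$.

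Since the $k$-sum is finite I can interchange it with the integral, reducing everything to the single-polynomial moments
\[
I_k:=\int_0^\infty x^{s+\alpha\beta}e^{-\beta x}L_k^{(\alpha)}(x)\,dx.
\]
Substituting the explicit expansion $L_k^{(\alpha)}(x)=\sum_{t=0}^k(-1)^t\binom{k+\alpha}{k-t}x^t/t!$ and using $\int_0^\infty x^{p}e^{-\beta x}\,dx=\Gamma(p+1)\beta^{-(p+1)}$ term by term gives
\[
I_k=\beta^{-1-s-\alpha\beta}\sum_{t=0}^k\frac{(-1)^t}{t!}\binom{k+\alpha}{k-t}\beta^{-t}\Gamma(\alpha\beta+s+t+1),
\]
and multiplying by the linearization coefficient and summing over $k$ produces verbatim the first expression in \eqref{GKL2}.

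For the second, hypergeometric form I would recognize the inner $t$-sum as a terminating Gauss series. Converting the Gamma functions to Pochhammer symbols through $(\alpha\beta+s+1)_t=\Gamma(\alpha\beta+s+t+1)/\Gamma(\alpha\beta+s+1)$, $(-k)_t=(-1)^tk!/(k-t)!$ and $\binom{k+\alpha}{k-t}=\Gamma(k+\alpha+1)/[(k-t)!\,\Gamma(\alpha+t+1)]$ yields the identity
\[
k!\sum_{t=0}^k\frac{(-1)^t}{t!}\binom{k+\alpha}{k-t}\beta^{-t}\Gamma(\alpha\beta+s+t+1)=\frac{\Gamma(\alpha\beta+s+1)\,\Gamma(\alpha+k+1)}{\Gamma(\alpha+1)}\,{}_2F_1\!\left(-k,\alpha\beta+s+1;\alpha+1;\tfrac1\beta\right),
\]
so the factor $k!$ is absorbed into $\Gamma(\alpha+k+1)$, the prefactor becomes $(-1)^k\Gamma(\alpha+k+1)/[2^k(m+n-k)!]$, and $\Gamma(\alpha\beta+s+1)/\Gamma(\alpha+1)$ is pulled outside the $k$-sum, giving the second line of \eqref{GKL2}. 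Finally, setting $m=n$ collapses $|m-n|$ to $0$, replaces $(-2)^{m+n}$ by $2^{2n}$, merges $\Gamma(k-n+1+j)\Gamma(k-m+1+j)$ into $\Gamma^2(k-n+1+j)$ and simplifies the shifted arguments $\tfrac{k-m-n}{2}\to\tfrac{k}{2}-n$, yielding \eqref{GKL3}.

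The step I expect to be the genuine obstacle is the product-linearization formula itself: pinning down the exact closed form of the coefficient, with its paired half-integer Pochhammer symbols $(\tfrac{k-m-n}{2})_j(\tfrac{k-m-n+1}{2})_j$ and the precise summation limits $\max\{0,n-k,m-k\}\le j\le[\tfrac{m+n-k+1}{2}]$, is the only genuinely nontrivial algebraic ingredient; I would take it from the cited linearization literature, verifying it if desired on the small cases $m=n\in\{0,1\}$. Everything downstream—the Gamma integral for $I_k$, the legitimacy of the term-by-term interchange (all sums being finite), and the recognition of the terminating ${}_2F_1$—is routine bookkeeping.
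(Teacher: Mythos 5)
Your proposal is correct and follows essentially the same route as the paper's proof: expand $[\omega_\alpha]^\beta$ into the modified weight $x^{\alpha\beta}e^{-\beta x}$, apply the Laguerre product-linearization formula, insert the explicit monomial expansion of $L_k^{(\alpha)}$, and evaluate term by term with $\int_0^\infty x^p e^{-\beta x}\,dx=\Gamma(p+1)\beta^{-p-1}$. Your write-up is in fact more detailed than the paper's (which compresses the final bookkeeping into ``after some computations''), and your Pochhammer rewriting correctly produces the terminating ${}_2F_1$ form of the second line of \eqref{GKL2}.
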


\begin{proof}
	The keys to establish this result are the explicit expression of classical Laguerre polynomials \cite{olver2010} and the linearization formula  for the product of the Laguerre orthogonal polynomials \cite{ruiz1997,ruiz1999}. These formulae are
	\begin{align*}
	L_{n}^{(\alpha)}(x)&= \sum_{i=0}^n\frac{(-1)^i}{i!} {n+\alpha \choose n-i} x^i, \\
	 L_{n}^{(\alpha)}(x)L_{m}^{(\alpha)}(x)&= (-2)^{n+m} \sum_{k=|n-m|}^{n+m} \frac{(-1)^k k!}{2^k (m+n-k)!}\nonumber\\
	 & \times \sum_{j=\max \{0,n-k,m-k\}}^{[\frac{m+n-k+1}{2}]} \frac{ \left(\frac{k-m-n}{2} \right)_j \left(\frac{k-m-n+1}{2} \right)_j (\alpha+k+1)_j}{j! \, \Gamma(k-n+1+j) \, \Gamma(k-m+1+j)} L_{n}^{(\alpha)}(x).
	\end{align*}
	It only remains to combine the above expressions adequately and use the following integral
	$$
	\int_0^{\infty} x^{n+\alpha \beta} e^{-\beta x}dx=\frac{\Gamma (n+\alpha \beta+1)}{ \beta^{n+1+\alpha \beta}}. $$
	Then, after some computations, we obtain the result for $\mathcal{J}_{m,n}^{(L)}(s,\beta).$
\end{proof}

\textbf{Applications.} Let us now apply this theorem to compute the power, Krein-like and logarithmic moments of the Rakhmanov probability density $\rho_{n}^{(L)}(x)$ of the Laguerre polynomials $L_{n}^{(\alpha)}(x)$.

\begin{enumerate}
	\item Power moments.
	From \eqref{GKL2}, with $m=n$ and $\beta =1$, one obtains the analytical expression of the corresponding power moments of the Laguerre polynomials as
	\begin{align}
	\label{PML1}
	\langle x^{s}\rangle_{n}^{(L)} &=\mathcal{J}_{n,n}^{(L)}(s,\alpha,1)\nonumber \\
	&=\frac{2^{2n}\, \Gamma(\alpha+s+1 )}{\Gamma(\alpha+1)}
	\sum_{k=0}^{2n} \left\{ \frac{(-1)^k \Gamma(\alpha+k+1)\,
		_2F_1\left(-k,\alpha+s+1 ;
		\alpha+1;1 \right)}{2^k (2n-k)!}\, \right.  \nonumber \\
	& \times
	\left. \sum_{j=\max\{0,n-k\}}^{\left[n-k/2+1/2\right]}
	\frac{\left(\frac{k}{2}-n\right)_j \left(\frac{k+1}{2}-n\right)_j
		(\alpha+k+1)_j}{j! \, \Gamma^2(k-n+1+j) } \right\}.
	\end{align}
	
	\item Krein-like moments. The Krein-like moments of Laguerre polynomials, obtained from \eqref{GKL2} by making $m=n$, $s=0$ and $\beta =k+1$, are given by
	\begin{align}
	\label{KML1}
	\langle [\omega(x)]^{k}\rangle_{n}^{(L)}  & =  \mathcal{J}^{(L)}_{n,n}(0,\alpha,k+1) \nonumber \\
	&=\frac{2^{2n}\,\Gamma(\alpha
		(k+1)+1)}{ (k+1)^{\alpha (k+1)+1}\,\Gamma(\alpha+1)} \sum_{i=0}^{2n} \left\{ \frac{(-1)^i
		\Gamma(\alpha+i+1)\, _2F_1\left(-i,\alpha (k+1)+1;
		\alpha+1;\frac{1}{k+1} \right)}{2^i (2n-i)!}\, \right. \nonumber \\
	& \times
	\left. \sum_{j=\max\{0,n-i\}}^{\left[n-i/2+1/2\right]}
	\frac{\left(\frac{i}{2}-n\right)_j \left(\frac{i+1}{2}-n\right)_j
		(\alpha+i+1)_j}{j!\, \Gamma^2(i-n+1+j) } \right\}.
	\end{align}
	
	\item Logarithmic moments. The logarithmic moments of the Laguerre polynomials are obtained from the power moments as
	\begin{align}
	\label{LML1}
	\langle (\log x)^{k}\rangle_{n}^{(L)} &= \frac{d^{k}}{ds^{k}} \langle x^{s}\rangle_{n}^{(L)} \Bigg|_{s=0}.
	\end{align}
\end{enumerate}

\textbf{An example}

An advantage of this analytical expression is that it is very efficient computationally when we want to calculate the exact value of the integral. For example, choosing $\alpha=4, $ we compute  easily
\begin{align}
\mathcal{J}^{(L)}_{7,15}(2,3)= \int_{0}^{\infty} x^{2}[\omega_{4}^{(L)}(x)]^{3}L_{7}^{(4)}(x)L_{15}^{(4)}(x)\, dx &= \nonumber \\ &\hspace{-4cm} =\frac{10908801561641984000}{68630377364883}\approx 158950.04487071103773.
\end{align}
The computations have been made with \ma version 11. The mean timing was 0.00178125 seconds over 1000 trials. The straight evaluation of this integral using \ma needs 0.209031 seconds over the same number of trials. The timing according to \ma  reports CPU time used, in this case with a standard computer. Moreover,  the mean absolute time over 1000 trials, i.e. the total time to do the computation averaged over 1000 trials, was 0.00175197 using (\ref{GKL2}) and 0.215385 seconds using  straight evaluation of the integral via \ma. In this example, (\ref{GKL2}) was more than a hundred times faster than the evaluation of the integral via this software.

\subsection{Hermite polynomials }

\begin{theorem}
Let $H_{m}(x)$ denote the Hermite polynomials orthogonal with respect to the weight function $\omega(x)=e^{-x^{2}}$ on $(-\infty,\infty)$. Then, the (standard) Krein-like functionals of the Hermite polynomials defined by
\begin{equation}
\label{gKH1}
\mathcal{J}^{(H)}_{m,n}(s,\beta):=\int_{-\infty}^{\infty} x^{s}[\omega^{(H)}(x)]^{\beta}H_{m}(x)H_{n}(x)\, dx,\quad s\in\mathbb{Z}_{+}
\end{equation}
have the following analytical expression
\begin{align}
\label{GKH2}
\mathcal{J}^{(H)}_{m,n}(s,\beta) &= \frac{2^{m+n}}{\beta^{(m+n+s+1)/2}}\, \sum_{k=0}^{\min(m,n)}{ m \choose k} { n \choose k}  k! \left(\frac{\beta}{2}\right)^k \Gamma(1/2-k+(m+n+s)/2)\nonumber \\
& \hspace{3cm}\times\, _2F_1\left(k-\frac{m+n}{2},\frac{1}{2}+k-\frac{m+n}{2}; \frac{1}{2}+k-\frac{m+n+s}{2};\beta \right)
\end{align}
if $m+n+s$ even and $0$ if $m+n+s$ odd. In the case $m=n$ this expression reduces to
\begin{align}
\label{GKH3}
\mathcal{J}^{(H)}_{n,n}(s,\beta) &= \frac{2^{2n}}{\beta^{n+(s+1)/2}}\, \sum_{k=0}^{n} { n \choose
   k}^2
    k! \left(\frac{\beta}{2}\right)^k
    \Gamma(1/2-k+n+s/2)\nonumber \\
    & \hspace{2cm} \times \,  _2F_1\left(k-n,\frac{1}{2}+k-n;
    \frac{1}{2}+k-n-\frac{s}{2};\beta \right)
\end{align}
if $m+n+s$ even and $0$ if $m+n+s$ odd.
\end{theorem}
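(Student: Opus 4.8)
The plan is to peel the two polynomials apart with the Hermite product-linearization formula and then to evaluate the resulting single-polynomial Gaussian moment in closed form as a terminating Gauss function. First I would use the classical linearization
\begin{equation*}
H_{m}(x)H_{n}(x)=\sum_{k=0}^{\min(m,n)}\binom{m}{k}\binom{n}{k}2^{k}k!\,H_{m+n-2k}(x),
\end{equation*}
which is the unscaled instance of Eq.~\eqref{linforH}, to reduce \eqref{gKH1} to
\begin{equation*}
\mathcal{J}^{(H)}_{m,n}(s,\beta)=\sum_{k=0}^{\min(m,n)}\binom{m}{k}\binom{n}{k}2^{k}k!\int_{-\infty}^{\infty}x^{s}e^{-\beta x^{2}}H_{m+n-2k}(x)\,dx.
\end{equation*}
Because $H_{N}$ has parity $(-1)^{N}$ and $e^{-\beta x^{2}}$ is even, each surviving integrand $x^{s}e^{-\beta x^{2}}H_{m+n-2k}(x)$ has parity $(-1)^{m+n+s}$; hence every term, and with it $\mathcal{J}^{(H)}_{m,n}$, vanishes when $m+n+s$ is odd, which yields the stated dichotomy.

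The heart of the matter is the single moment $\mathcal{K}_{N}:=\int_{-\infty}^{\infty}x^{s}e^{-\beta x^{2}}H_{N}(x)\,dx$ with $N=m+n-2k$. I would substitute the explicit representation $H_{N}(x)=N!\sum_{l=0}^{\lfloor N/2\rfloor}\frac{(-1)^{l}}{l!\,(N-2l)!}(2x)^{N-2l}$ and integrate termwise with the Gaussian moment $\int_{-\infty}^{\infty}x^{2p}e^{-\beta x^{2}}\,dx=\Gamma(p+\tfrac12)\beta^{-p-1/2}$, odd powers integrating to zero. Writing $P:=(s+N)/2$ so that the surviving exponent $s+N-2l$ equals $2(P-l)$, this gives the finite sum
\begin{equation*}
\mathcal{K}_{N}=2^{N}\beta^{-P-1/2}\,N!\sum_{l=0}^{\lfloor N/2\rfloor}\frac{(-1)^{l}2^{-2l}}{l!\,(N-2l)!}\,\beta^{l}\,\Gamma\!\left(P-l+\tfrac12\right).
\end{equation*}

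The decisive step is recognizing this sum as a ${}_2F_1$. I would rewrite $\frac{N!}{(N-2l)!}=(-N)_{2l}=4^{l}\bigl(-\tfrac{N}{2}\bigr)_{l}\bigl(\tfrac{1-N}{2}\bigr)_{l}$ by the Pochhammer duplication and $\Gamma\!\left(P+\tfrac12-l\right)=(-1)^{l}\Gamma\!\left(P+\tfrac12\right)/\bigl(\tfrac12-P\bigr)_{l}$ by Pochhammer reflection; after the competing factors $(-1)^{l}$, $2^{-2l}$ and $4^{l}$ cancel, the summand collapses to $\frac{(-N/2)_{l}(\frac{1-N}{2})_{l}}{(\frac12-P)_{l}}\frac{\beta^{l}}{l!}$, so that
\begin{equation*}
\mathcal{K}_{N}=2^{N}\beta^{-P-1/2}\,\Gamma\!\left(P+\tfrac12\right)\,{}_2F_1\!\left(-\tfrac{N}{2},\tfrac{1-N}{2};\tfrac12-P;\beta\right).
\end{equation*}
Inserting $N=m+n-2k$ and $P=\tfrac{m+n+s}{2}-k$ converts the two numerator entries, the denominator entry and the Gamma argument into exactly those appearing in \eqref{GKH2}, and turns the prefactor into $2^{m+n-2k}\beta^{\,k-(m+n+s+1)/2}\Gamma\!\left(\tfrac12-k+\tfrac{m+n+s}{2}\right)$. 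One checks that for $N$ even the first parameter $-N/2$, and for $N$ odd the second parameter $\tfrac{1-N}{2}$, is a nonpositive integer, so the series terminates at $l=\lfloor N/2\rfloor$ in agreement with the original summation range.

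Finally I would reinsert $\mathcal{K}_{m+n-2k}$ into the linearization sum: the powers coalesce as $2^{k}\cdot2^{m+n-2k}=2^{m+n-k}$ and $2^{-k}\beta^{k}=(\beta/2)^{k}$, so that $2^{m+n}\beta^{-(m+n+s+1)/2}$ factors out and leaves precisely \eqref{GKH2}. The diagonal identity \eqref{GKH3} is the immediate specialization $m=n$, whereupon $\binom{m}{k}\binom{n}{k}=\binom{n}{k}^{2}$ and the three hypergeometric parameters reduce to $k-n$, $\tfrac12+k-n$ and $\tfrac12+k-n-\tfrac{s}{2}$. I expect the only genuinely delicate point to be the ${}_2F_1$ recognition, where the signs $(-1)^{l}$ and the opposing powers of $2$ coming from the duplication and reflection identities must be tracked exactly; the remaining work is routine bookkeeping of prefactors.
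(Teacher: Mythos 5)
Your proposal is correct and follows essentially the same route as the paper's proof: linearize the product $H_mH_n$ via the Feldheim formula, expand the remaining single Hermite polynomial explicitly, and integrate termwise against the Gaussian using $\int_{-\infty}^{\infty}x^{2p}e^{-\beta x^{2}}dx=\Gamma(p+\tfrac12)\beta^{-p-1/2}$. The only difference is that you carry out in detail the Pochhammer duplication/reflection bookkeeping that assembles the inner sum into the terminating ${}_2F_1$, a step the paper compresses into ``after some computations we deduce the result.''
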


\begin{proof}
The explicit expression of Hermite polynomials is (see \cite{olver2010})
\begin{equation} \label{her-exp}
H_n(x)=\sum_{k=0}^{[n/2]} \frac{(-1)^k n!}{k! (n-2k)!} (2x)^{n-2k},
\end{equation}
where $[a]$ denotes the greatest integer in a. We also need the well--known linearization formula  for the product of the Hermite orthogonal polynomials (see, for example, \cite{ruiz1997,ruiz1999})
\begin{equation} \label{lin-pro-her}
H_m(x)H_n(x)=\sum_{k=0}^{\min(m,n)} { m \choose k} { n \choose k} 2^k k! H_{m+n-2k}(x).\end{equation}
Then, applying (\ref{her-exp}) and (\ref{lin-pro-her}), we get
\begin{align*}\mathcal{J}^{(H)}_{m,n}(s,\beta)&= \sum_{k=0}^{\min(m,n)} { m \choose k} { n \choose k} 2^k k! \int_{-\infty}^{\infty} x^{s}[\omega^{(H)}(x)]^{\beta}H_{m+n-2k}(x)\, dx. \\
&\hspace{-2cm} = \sum_{k=0}^{\min(m,n)} { m \choose k} { n \choose k} 2^k k!\sum_{j=0}^{[(m+n-2k)/2]} \frac{(-1)^j 2^{m+n-2(k+j)}(m+n-2k)!}{j! (m+n-2k-2j)!}\nonumber\\
& \hspace{-2cm} \times  \int_{-\infty}^{\infty} x^{m+n+s-2(k+j)}e^{-\beta x^2}dx.
\end{align*}
Obviously, the above integral is 0 when $m+n+s$ is odd. Taking into account the following integral
$$
\int_{-\infty}^{\infty} x^{2n}e^{-\beta x^2}dx= \frac{\Gamma(n+1/2)}{ \beta^{n+1/2}}, $$
where $n$ is a nonnegative integer and after some computations we deduce the result for the case $m+n+s$ even.
\end{proof}

\textbf{Applications.} Let us now apply this theorem to the power, Krein-like and logarithmic moments of the Rakhmanov probability density $\rho_{n}^{(H)}(x)$ of Hermite polynomials $H_{n}(x)$ previously defined.
\begin{enumerate}
	\item Power moments.
From \eqref{GKH2}, with $m=n$ and $\beta =1$, one obtains the analytical expression of the corresponding power moments of the Hermite polynomials defined in \eqref{HPM1} as
\begin{align}
\label{PMH1}
\langle x^{s}\rangle_{n}^{(H)} &=  \mathcal{J}^{(H)}_{n,n}(s,1) \nonumber \\
& = 2^{2n}\, \sum_{k=0}^{n} { n \choose
   k}^2
    k! 2^{-k}
    \Gamma(1/2-k+n+s/2)\nonumber \\
    & \times \,  _2F_1\left(k-n,\frac{1}{2}+k-n;
    \frac{1}{2}+k-n-\frac{s}{2};1 \right)
\end{align}
if $m+n+s$ even and $0$ if $m+n+s$ odd.

\item Krein-like moments. The Krein-like moments of Hermite polynomials, obtained from \eqref{GKH2} by making $m=n$, $s=0$ and $\beta =k+1$, are given by
\begin{align}
\label{KMH1}
\langle [\omega(x)]^{k}\rangle_{n}^{(H)}  & =  \mathcal{J}^{(H)}_{n,n}(0,k+1) \nonumber \\
&=  2^{2n}(k+1)^{-n-1/2}\sum_{i=0}^{n} { n \choose
	i}^2
i! \left(\frac{k+1}{2}\right)^i
\Gamma(1/2-i+n)\nonumber \\
& \hspace{2cm} \times \,  _2F_1\left(i-n,\frac{1}{2}+i-n;
\frac{1}{2}+i-n;k+1 \right).
\end{align}

\item Logarithmic moments. The logarithmic moments of the Hermite polynomials can be obtained from the power moments as
\begin{align}
\label{LMH1}
\langle (\log x)^{k}\rangle_{n}^{(H)} &= \frac{d^{k}}{ds^{k}} \langle x^{s}\rangle_{n}^{(H)} \Bigg|_{s=0}.
\end{align}
\end{enumerate}

\subsection{Jacobi polynomials}


\begin{theorem}
Let $P_{m}^{(\alpha,\gamma)}(x)$ denote the Jacobi polynomials orthogonal with respect to the weight function $\omega_{\alpha,\gamma}(x)=(1-x)^{\alpha}(1+x)^{\gamma}$ on $(-1,1)$ \cite{nikiforov1988,olver2010}. Then, the (standard) Krein-like functionals of the Jacobi polynomials defined by 
\begin{equation}
\label{GKJ1}
\mathcal{J}^{(J)}_{m,n}(s,\alpha,\gamma,\beta):=\int_{-1}^{1} x^{s}[\omega_{\alpha,\gamma}^{(J)}(x)]^{\beta}P_{m}^{(\alpha,\gamma)}(x)P_{n}^{(\alpha,\gamma)}(x)\, dx, \quad \alpha,\gamma>-\frac{1}{\beta},\quad s\in\mathbb{Z}_{+}
\end{equation}
can be expressed as
\begin{align}
\label{GKJ2}
\mathcal{J}_{m,n}^{(J)}(s,\alpha,\gamma,\beta)&= \sum_{k=|n-m|}^{n+m}
b_{nmk}^{(\alpha,\gamma)}\sum_{j=0}^{k} c_{kj}^{(\alpha,\gamma)}(\beta)d_{j}^{(\alpha,\gamma)}(s,\beta)\nonumber \\
& \times \frac{2^{\beta(\alpha  +  \gamma) +1} \Gamma (j+\alpha  \beta +1) \Gamma (j+\beta  \gamma +1)}{j! (\beta(\alpha  +  \gamma)+2 j+1) \Gamma (\beta(\alpha  +  \gamma)+j +1)},
\end{align}
with the coefficients
 \begin{align}
 \label{GKJC1}
 b_{nmk}^{(\alpha,\gamma)} &=\frac{k!\, \Gamma(\alpha+\gamma+k+1)
\Gamma(\alpha+\gamma+2n+1)\Gamma(\alpha+\gamma+2n+1)}{n!\,
m!\,\Gamma(\alpha+\gamma+2k+1) \Gamma(\alpha+\gamma+n+1)
\Gamma(\alpha+\gamma+m+1)}\nonumber\\
& \times \sum_{t=\max\{0,k-m\}}^{\min\{k,n\}}
{n \choose t} {m \choose k-t}
\frac{(\alpha+t+1)_{n-t}}{(\alpha+\gamma+n+t+1)_{n-t}} \nonumber  \\
&\times \sum_{w=0}^{m+t-k} {m+t-k \choose w}
\frac{(-1)^w(\alpha+\gamma+2m+1)_{k+w-m-t}(\alpha+k+1)_w}{(\alpha+m+1)_{k+w-m-t}(\alpha+\gamma+2k+2)_w}
\nonumber \\
&\times\, {}_3F_2(t-n,\alpha+\gamma+n+t+1,\alpha+k+w+1;
\alpha+t+1, \alpha+\gamma+2k+w+2;1), \nonumber \\
 c_{kj}^{(\alpha,\gamma)}(\beta) &=\frac{(\alpha+\gamma+k+1)_j
(\alpha+j+1)_{k-j}
(\beta(\alpha+\gamma)+2j+1)\Gamma(\beta(\alpha+\gamma)+j+1)}{(k-j)!\,
\Gamma(\beta(\alpha+\gamma)+2j+2)} \nonumber  \\
&\times {}_3F_2(-k+j,\alpha+\gamma+k+j+1,\alpha \beta+j+1;
\alpha+j+1, \beta(\alpha+\gamma)+2j+2; 1)\quad \text{and} \nonumber \\
d_{j}^{(\alpha,\gamma)}(s,\beta) &=(-1)^{s-j}{s \choose j}{}_2F_1(j-s,\beta \gamma+j+1;
\beta(\alpha+\gamma)+2j+2;2)\, \frac{2^j \,j!\,
\Gamma(\beta(\alpha+\gamma)+j+1)}{\Gamma(\beta(\alpha+\gamma)+2j+1)}.
\end{align}
\end{theorem}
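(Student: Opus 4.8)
The plan is to reduce the integral to the orthogonality relation of the Jacobi family whose weight is precisely $[\omega_{\alpha,\gamma}(x)]^{\beta}=(1-x)^{\alpha\beta}(1+x)^{\gamma\beta}$, namely the $(\alpha\beta,\gamma\beta)$-Jacobi polynomials. Exactly as in the Laguerre and Hermite cases treated above, the idea is to rewrite every factor of the integrand in the basis $\{P_{j}^{(\alpha\beta,\gamma\beta)}\}$ and then collapse the resulting multiple sum by orthogonality. The new feature here is that three successive expansions are required, and the three coefficient families $b_{nmk}^{(\alpha,\gamma)}$, $c_{kj}^{(\alpha,\gamma)}(\beta)$ and $d_{j}^{(\alpha,\gamma)}(s,\beta)$ are exactly the ingredients of these expansions.

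First I would linearize the product of the two polynomials of equal parameters,
\begin{equation*}
P_{m}^{(\alpha,\gamma)}(x)\,P_{n}^{(\alpha,\gamma)}(x)=\sum_{k=|n-m|}^{n+m} b_{nmk}^{(\alpha,\gamma)}\,P_{k}^{(\alpha,\gamma)}(x),
\end{equation*}
where $b_{nmk}^{(\alpha,\gamma)}$ is the standard Jacobi product-linearization coefficient obtained from the Rodrigues-based method recalled in the previous section (cf. Eqs. \eqref{lin_rel_eq}--\eqref{coeff_lin_eq} and \cite{artes1998}); this yields the nested sum with the terminating ${}_3F_2$ of \eqref{GKJC1}. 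Since these $P_{k}^{(\alpha,\gamma)}$ carry the wrong parameter set for the weight at hand, I would next apply the Jacobi connection formula
\begin{equation*}
P_{k}^{(\alpha,\gamma)}(x)=\sum_{j=0}^{k} c_{kj}^{(\alpha,\gamma)}(\beta)\,P_{j}^{(\alpha\beta,\gamma\beta)}(x),
\end{equation*}
whose coefficients $c_{kj}^{(\alpha,\gamma)}(\beta)$ form the second family in \eqref{GKJC1}, thereby re-expressing the product purely in the $(\alpha\beta,\gamma\beta)$ basis.

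It then remains to treat the monomial $x^{s}$, which I would expand in the same basis,
\begin{equation*}
x^{s}=\sum_{i=0}^{s} d_{i}^{(\alpha,\gamma)}(s,\beta)\,P_{i}^{(\alpha\beta,\gamma\beta)}(x),
\end{equation*}
the coefficients $d_{i}^{(\alpha,\gamma)}(s,\beta)$ being the monomial-inversion coefficients of the third family in \eqref{GKJC1}; the factor $\binom{s}{i}$ reflects their vanishing for $i>s$, and the terminating ${}_2F_1(\,\cdot\,;2)$ is the closed form of the relevant inversion integral. Substituting the three expansions into \eqref{GKJ1} and integrating against $(1-x)^{\alpha\beta}(1+x)^{\gamma\beta}$, the orthogonality of the $(\alpha\beta,\gamma\beta)$-Jacobi polynomials forces the monomial index $i$ and the connection index $j$ to coincide, producing a Kronecker delta and leaving the squared norm
\begin{equation*}
\bigl\|P_{j}^{(\alpha\beta,\gamma\beta)}\bigr\|^{2}=\frac{2^{\beta(\alpha+\gamma)+1}\,\Gamma(j+\alpha\beta+1)\,\Gamma(j+\beta\gamma+1)}{j!\,(\beta(\alpha+\gamma)+2j+1)\,\Gamma(\beta(\alpha+\gamma)+j+1)},
\end{equation*}
which is precisely the trailing factor in \eqref{GKJ2}. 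Collecting $b_{nmk}^{(\alpha,\gamma)}$, $c_{kj}^{(\alpha,\gamma)}(\beta)$ and $d_{j}^{(\alpha,\gamma)}(s,\beta)$ then gives the stated formula.

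The main obstacle is not the final orthogonality step, which is routine, but the derivation and bookkeeping of the three coefficient families in mutually consistent notation. In particular, establishing the explicit closed form of the product-linearization coefficient $b_{nmk}^{(\alpha,\gamma)}$ from the general formula \eqref{coeff_lin_eq} is the heaviest computation, since it requires evaluating the Rodrigues-type integrals and resumming into a ${}_3F_2$; one must also ensure that the parameter substitution $(\alpha,\gamma)\mapsto(\alpha\beta,\gamma\beta)$ is threaded consistently through the connection coefficients $c_{kj}^{(\alpha,\gamma)}(\beta)$ and the monomial coefficients $d_{j}^{(\alpha,\gamma)}(s,\beta)$, so that the three $j$-dependent pieces align with the single surviving index after orthogonality.
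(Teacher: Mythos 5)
Your proposal follows essentially the same route as the paper's proof: linearize the product $P_{m}^{(\alpha,\gamma)}P_{n}^{(\alpha,\gamma)}$ into the $P_{k}^{(\alpha,\gamma)}$ basis, connect to the $P_{j}^{(\alpha\beta,\gamma\beta)}$ basis via the parameter-change formula, invert the monomial $x^{s}$ in that same basis, and collapse everything with orthogonality, leaving the squared norm. The only cosmetic difference is sourcing: the paper quotes the linearization coefficients $b_{nmk}^{(\alpha,\gamma)}$ ready-made from the literature and the connection coefficients from Lemma 7.1.1 of Andrews--Askey--Roy, whereas you propose to rederive them from the Rodrigues-based machinery of the preceding section, which is a heavier but equivalent computation.
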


\begin{proof}
First, we observe that
$$
\mathcal{J}^{(J)}_{m,n}(s,\alpha,\gamma,\beta)=\int_{-1}^{1} x^{s}(1-x)^{\alpha \beta} (1+x)^{\gamma \beta} P_{m}^{(\alpha,\gamma)}(x)P_{n}^{(\alpha,\gamma)}(x)\, dx.
$$
Therefore, it is necessary to write down the product $P_{m}^{(\alpha,\gamma)}\,P_{n}^{(\alpha,\gamma)}$ in terms of the polynomials $P_i^{(\alpha\beta,\gamma \beta)}.$ The linearization formula for Jacobi polynomials is given by \cite{savin2017,koepf2010}
$$
P_{m}^{(\alpha,\gamma)}(x)\,P_{n}^{(\alpha,\gamma)}(x)=\sum_{k=|n-m|}^{n+m} b_{nmk}P_{k}^{(\alpha,\gamma)}(x),
$$
where the coefficients $b_{nmk}$ are given in (\ref{GKJC1}). Applying \cite[Lemma 7.1.1]{andrews1999} we obtain
$$
P_{k}^{(\alpha,\gamma)}(x)=\sum_{j=0}^{k}c_{kj}^{(\alpha,\gamma)}(\beta) P_k^{(\alpha \beta,\gamma \beta)}(x),
$$
where  $c_{kj}^{(\alpha,\gamma)}(\beta)$ are given in (\ref{GKJC1}). Therefore, combining the above expressions we get
\begin{equation}
\label{linjacob}
P_{m}^{(\alpha,\gamma)}(x)\,P_{n}^{(\alpha,\gamma)}(x)=\sum_{k=|n-m|}^{n+m} b_{nmk} \sum_{j=0}^{k}c_{kj}^{(\alpha,\gamma)}(\beta) P_j^{(\alpha \beta,\gamma \beta)}(x),
\end{equation}
so,
$$
\mathcal{J}^{(J)}_{m,n}(s,\alpha,\gamma,\beta)=\sum_{k=|n-m|}^{n+m} b_{nmk} \sum_{j=0}^{k}c_{kj}^{(\alpha,\gamma)}(\beta)\int_{-1}^{1} x^{s}(1-x)^{\alpha \beta} (1+x)^{\gamma \beta} P_j^{(\alpha \beta,\gamma \beta)}(x)\, dx.
$$
Finally, using the inversion formula for Jacobi polynomials \cite{ruiz1997,ruiz1999}
\begin{equation*}
x^s= \sum_{\ell=0}^s d_{\ell}^{(\alpha,\gamma)}(s,\beta) P_{\ell}^{(\alpha \beta,\gamma \beta)}(x),
\end{equation*}
we get
\begin{align}
\label{JKF3}
\mathcal{J}^{(J)}_{m,n}(s,\alpha,\gamma,\beta)&=\sum_{k=|n-m|}^{n+m} b_{nmk} \sum_{j=0}^{k}c_{kj}^{(\alpha,\gamma)}(\beta) \sum_{\ell=0}^s d_{\ell}^{(\alpha,\gamma)}(s,\beta) \nonumber \\
&\times \int_{-1}^{1} (1-x)^{\alpha \beta} (1+x)^{\gamma \beta} P_j^{(\alpha \beta,\gamma \beta)}(x) P_{\ell}^{(\alpha \beta,\gamma \beta)}(x)\, dx.
\end{align}
Applying the orthogonality of Jacobi polynomials in Eq. (\ref{JKF3}), we arrive at Eq. (\ref{GKJ2}).

\end{proof}

\textbf{Applications}. Let us now apply this theorem to the power, Krein-like, exponential and logarithmic moments of the Rakhmanov probability density $\rho_{n}^{(J)}(x)$ of the Jacobi polynomials $P_{n}^{(\alpha,\gamma)}(x)$ previously defined.

\begin{enumerate}
	\item Power moments.
From Eq. \eqref{GKJ2}, with $m=n$ and $\beta =1$, one obtains the analytical expression of the corresponding power moments of the Jacobi polynomials as
\begin{align}
\label{PMJ1}
\langle x^{s}\rangle_{n}^{(J)} &  =\mathcal{J}_{n,n}^{(J)}(s,\alpha,\gamma,1)\nonumber \\
&= 2^{(\alpha+\gamma)+1}\sum_{k=0}^{2n}
b_{nk}^{(\alpha,\gamma)}\sum_{j=0}^{k} c_{kj}^{(\alpha,\gamma)}(1)d_{j}^{(\alpha,\gamma)}(s,1)\nonumber \\
& \times \frac{\Gamma(\alpha+j+1)
\Gamma(\gamma+j+1)}{((\alpha+\gamma)+2j+1)\,  j! \,
\Gamma((\alpha+\gamma)+j+1)}
\end{align}
where one can obtain the values of the coefficients, $b_{nk}^{(\alpha,\gamma)}$, $c_{kj}^{(\alpha,\gamma)}(1)$ and $d_{j}^{(\alpha,\gamma)}(s,1)$ by just substituting $m=n$ and $\beta=1$ in their corresponding definitions. We omit them since they hardly get simplified.

\item Krein-like moments.
The Krein-like moments of Jacobi polynomials, obtained from \eqref{GKJ2} by making $m=n$, $s=0$ and $\beta =k+1$, are given by
\begin{align}
\label{KMJ1}
\langle [\omega(x)]^{k}\rangle_{n}^{(J)}  & =  \mathcal{J}^{(J)}_{n,n}(0,\alpha,\gamma,k+1) \nonumber \\
& = 2^{(k+1)(\alpha+\gamma)+1}\sum_{i=0}^{2n}
b_{ni}^{(\alpha,\gamma)}\sum_{j=0}^{i} c_{ij}^{(\alpha,\gamma)}(k+1)d_{j}^{(\alpha,\gamma)}(0,k+1)\nonumber\\
& \times \frac{\Gamma(\alpha (k+1)+j+1)
\Gamma(\gamma (k+1)+j+1)}{((k+1)(\alpha+\gamma)+2j+1)\, j!\,
\Gamma((k+1)(\alpha+\gamma)+j+1) }
\end{align}
where one can obtain the values of the coefficients, $b_{ni}^{(\alpha,\gamma)}$, $c_{ij}^{(\alpha,\gamma)}(k+1)$ and $d_{j}^{(\alpha,\gamma)}(0,k+1)$ by just substituting $m=n$, $s=0$ and $\beta=k+1$ in their corresponding definitions. Again, we omit them since they hardly get simplified.

\item Logarithmic moments.

The logarithmic moments of the Jacobi polynomials are obtained from the power moments as
\begin{align}
\label{LMJ1}
\langle (\log x)^{k}\rangle_{n}^{(J)} &= \frac{d^{k}}{ds^{k}} \langle x^{s}\rangle_{n}^{(J)} \Bigg|_{s=0}.
\end{align}

\end{enumerate}

\section{Conclusions}

In this work three different approaches are proposed to compute the integral functionals with kernel of the type $[\omega(x)]^{\beta}\,x^{s}\, p_{m_{1}}(x)\,\ldots \,p_{m_{r}}(x)$, where $[p_{m}]$ denote hypergeometric polynomials orthogonal (HOPs) with respect to the weight function $\omega(x)$ on a real interval. They are called by generalized Krein-like functionals because of their close connection to the moment problems of power and Markov types early studied by M. G. Krein. These functionals are frequently encountered in numerous analytical and computational problems of theoretical physics and applied mathematics ranging from approximation theory, information theory of quantum systems up to quantum physics of atomic, molecular and nuclear physics and intermediate and high energy physics.

In the first approach the general case of a finite number of HOPs has been considered, showing that the associated Krein-like functionals can be expressed in a compact way by use of some multivariate hypergeometric functions of Lauricella and Srivastava-Daoust types. In the second and third approaches we consider only two HOPs for simplicity, obtaining the associated Krein functionals explicitly in terms of its coefficients of the hypergeometric equation and the polynomial degrees and/or the parameters of the corresponding weight function, respectively. 

Finally, the usefulness of the resulting expressions is illustrated by their application to compute a number of mathematical quantities like the moments of power, Krein, exponential and logarithmic types of the Rakhmanov probability density of the involved HOPS, which are not only relevant \textit{per se} but also because they describe numerous fundamental and experimentally accessible physical quantites of quantum systems. 

\section*{Acknowledgments}

JJMB is partially supported by Ministerio de Ciencia, Innovaci\'on y Universidades of Spain and European Regional Development Fund (MCIU-FEDER) through the grant  MTM2014-53963-P,  Junta de Andaluc\'{i}a through the Research Group FQM-0229 (belonging to Campus of International Excellence CEIMAR) and the research center CDTIME. IVT and JSD are partially supported by MCIU-FEDER through grant FIS2014-54497P, FIS2014-59311P, FIS2017-89349P and by Junta de Andaluc\'{i}a through the Research Group FQM-207.

\end{document}